\definecolor{lightgray}{rgb}{0.9,0.9,0.9}
\newcolumntype{g}{>{\columncolor{lightgray}}c}
\renewcommand\thmcontinues[1]{Continued}
\declaretheorem[style=definition,name=Example]{ex}
\newcommand{\eg}{e.g.}
\newcommand{\ie}{i.e.}
\newcommand{\viz}{viz.}
\newcommand{\suchthat}{\ensuremath{\mbox{s.t.}}}
\newcommand{\T}[1]{\ensuremath{T_{#1}}}
\newcommand{\C}[1]{\ensuremath{C_{#1}}}
\newcommand{\W}[1]{\ensuremath{W_{#1}}}
\newcommand{\m}{\ensuremath{m}}
\newcommand{\realset}{\ensuremath{\mathbb{R}}}
\newcommand{\opinionset}{\ensuremath{\mathbb{O}}}
\newcommand{\admissibleopinionset}[1]{\ensuremath{\mathbb{O}_{#1}}}
\newcommand{\tuple}[1]{\ensuremath{\langle #1 \rangle}}
\newcommand{\belief}[1]{\ensuremath{b_{#1}}}
\newcommand{\disbelief}[1]{\ensuremath{d_{#1}}}
\newcommand{\uncertainty}[1]{\ensuremath{u_{#1}}}
\newcommand{\opinion}[1]{\ensuremath{\tuple{\belief{#1},\allowbreak{} ~\disbelief{#1},\allowbreak{} ~\uncertainty{#1}}}}
\newcommand{\distance}[2]{\ensuremath{d_G(#1, #2)}}
\newcommand{\distanceexp}[2]{\ensuremath{d_E(#1, #2)}}
\newcommand{\optrustconf}[3]{\ensuremath{#1 \circ_{#3} #2}}
\newcommand{\optrustconffam}[3]{\ensuremath{#1 ~\overline{\circ_{\alpha_{C'}}}~ #2}}
\newcommand{\tri}[1]{\ensuremath{\overset{\triangle}{#1}}}
\newcommand{\ang}[1]{\ensuremath{\angle_{#1}}}
\newcommand{\versor}[1]{\ensuremath{\vv{e_{#1}}}}
\newcommand{\fusionop}[1]{\ensuremath{\Gamma_{#1}}}
\newcommand{\set}[1]{\ensuremath{\{#1\}}}
\newcommand{\setagents}{\ensuremath{\mathcal{A}}}
\newcommand{\agent}[1]{\ensuremath{a_{#1}}}
\newcommand{\neighbours}[1]{\ensuremath{N_{#1}}}
\newcommand{\opinionag}[2]{\ensuremath{O_{#1}^{#2}}}
\newcommand{\truthprob}[1]{\ensuremath{P_{#1}^{T}}}
\newcommand{\linkprob}{\ensuremath{P^L}}
\newcommand{\setbeliefs}[1]{\ensuremath{\mathcal{KB}_{#1}}}
\newcommand{\sharedbelief}{\ensuremath{\Omega}}
\newcommand{\truesymbol}{\ensuremath{\top}}
\newcommand{\falsesymbol}{\ensuremath{\perp}}
\newcommand{\lboot}{\ensuremath{\#_{B}}}
\newcommand{\josangdiscountsym}{\ensuremath{\otimes}}
\newcommand{\josangfusionsym}{\ensuremath{\oplus}}
\newcommand{\theagent}{\ensuremath{\agent{S}}}
\newcommand{\desideratumdiscount}[1]{\ensuremath{\mathbf{desD_{#1}}}}
\newcommand{\desideratumfusion}[1]{\ensuremath{\mathbf{desF_{#1}}}}
\newcommand{\requirementdiscount}[1]{\ensuremath{\mathbf{rd_{#1}}}}
\newcommand{\requirementfusion}[1]{\ensuremath{\mathbf{rf_{#1}}}}
\begin{document} 

\title{Subjective Logic Operators in Trust Assessment: an Empirical Study\thanks{The authors thank Lance Kaplan for the useful discussions. \qquad \qquad Research was sponsored by US Army Research laboratory and the UK Ministry of Defence and was accomplished under Agreement Number W911NF-06-3-0001. The views and conclusions contained in this document are those of the authors and should not be interpreted as representing the official policies, either expressed or implied, of the US Army Research Laboratory, the U.S. Government, the UK Ministry of Defense, or the UK Government. The US and UK Governments are authorized to reproduce and distribute reprints for Government purposes notwithstanding any copyright notation hereon.}}

\author{Federico Cerutti \and Alice Toniolo \and Nir Oren \and Timothy J. Norman}

\institute{F. Cerutti \and A. Toniolo \and N. Oren \and T. J. Norman \at
University of Aberdeen\\
School of Natural and Computing Science\\
King's College\\
AB24 3UE, Aberdeen, UK\\
\email{f.cerutti@abdn.ac.uk}\\
\email{alice.toniolo@abdn.ac.uk}\\
\email{n.oren@abdn.ac.uk}\\
\email{t.j.norman@abdn.ac.uk}} 

\date{Received: date / Accepted: date}
 
\maketitle

\begin{abstract} 
Computational trust mechanisms aim to produce trust ratings from
both direct and indirect information about agents'
behaviour. Subjective Logic (SL) has been widely adopted as the core
of such systems via its fusion and discount operators. In recent
research we revisited the semantics of these operators to explore an
alternative, geometric interpretation. In this paper we present a
principled desiderata for discounting and fusion operators in
SL. Building upon this we present operators that satisfy these
desirable properties, including a family of discount operators. We
then show, through a rigorous empirical study, that specific,
geometrically interpreted operators significantly
outperform standard SL operators in estimating ground truth. These
novel operators offer real advantages for computational models of
trust and reputation, in which they may be employed without
modifying other aspects of an existing system.

\keywords{Trust and reputation \and Information fusion \and Uncertain reasoning}
\end{abstract} 

\section{Introduction} 

Trust forms the backbone of human and artificial societies, improving
robustness of interactions by restricting the actions of untrusted
entities and mitigating the impact of untrusted information \citep{sensoy13reasoning}. Within
the multi-agent systems community \citep{Sabater2005}, the problem of
how to determine the degree of trustworthiness to assign to other
agents is foundational for the notion of agency and for its defining
relation of acting ``on behalf of''. Trustworthiness is utilised when
selecting partners for interactions; distrusted agents are less likely
to be engaged, reducing their influence over the system.

Trust mechanisms aim to compute a level of trust based on direct and
second-hand interactions between agents. The latter, commonly referred
to as \emph{reputational} information, is obtained from other agents
which have interacted with the subject of the assessment. Aspects of
such systems that have been examined include how to minimise the
damage caused by collusion between agents \citep{Haghpanah12prep}, the
nature of reputation information \citep{josang12trust}, and examining
trust in specific contexts and agent interaction configurations
\citep{burnett12subdelegation}.

In this paper we strengthen the analysis of an alternative to J{\o}sang's Subjective Logic (SL)  discounting and combination operators \citep{Josang2001}, which we have previously described in \citep{Cerutti2013a}.
In particular, we
enlarged the range of proposed discounting operators in order to
provide a more comprehensive experimental evaluation. Instead of
providing single operators we present a general approach, from which
an entire family of operators can be derived that are proved to be
compliant with specific desirable properties.  From our analysis we
can deduce some new interesting statistical properties of J{\o}sang's
operators, as well as of the proposed operators. This evaluation
methodology introduces two different metrics: the \emph{expected value
  distance} from the ground truth, and the \emph{geometric distance}
from the ground truth. According to the former, our family of
operators are shown to be almost equivalent to J{\o}sang's original operators,
and significantly more accurate in one case. Using the latter metric,
our operators compute reputation opinions closer to the ground truth
than J{\o}sang's, with the exception of one case. Further, one of our
proposed discounting operators outperforms the traditional SL operator
on both metrics.

In the next section we present a desiderata for discounting and fusion
operators, grounded on how trust models such as SL are
employed in practice.  After a brief overview of J{\o}sang's
SL in Section
\ref{sec:background-notions}, we formalise the desiderata in Section
\ref{sec:core-prop-requ} considering SL opinions, and show that they
are not satisfied by existing SL operators. We describe our proposed
operators in Section \ref{sec:core-properties}, and prove that they
comply with the desirable properties presented. Then, in Section
\ref{sec:experiment}, we describe our experiment designed to conduct
the comparative study among the operators, and, in Section
\ref{sec:results}, we evaluate our results to determine their
significance. We summarise the conclusions that can be drawn from this
study in Section \ref{sec:conclusions}. In order to improve the
readability of the paper, proofs of the described results can be found
in Appendix \ref{sec:proofs} (Appendix \ref{sec:geom-subj-logic}
discusses the mathematical foundations of the proposed operators).

\section{A Desiderata for Discounting and Fusion Operators}
\label{sec:desid-disc-fusi}

In this work we  focus on trust relations where an agent referred to as the \emph{truster}  --- $X$ --- ``depends'' on a \emph{trustee} --- $Y_i$ --- \citep{Castelfranchi2010}. As a concrete example, we examine the case where a trustee is responsible  for providing some information to a truster, as exemplified by the following scenario.

\begin{ex}[label=exa:scenario]
  \label{ex:scenario}
  Let $X$ be a military analyst who is collecting evidence in order to decide whether or not a specific area contains a certain type of weapon. In particular, he needs a datum \m{} from two sensors $Y_1$ and $Y_2$, each of which has a history of failure, thus affecting the trust $X$ places in them. Here, $X$ is the truster, and $Y_1$ and $Y_2$ the trustees.
\end{ex}

In such a scenario, the degree of trustworthiness of $Y_1$ and $Y_2$ is normally computed from historical data \citep{Josang2004}. Suppose that $X$ asks $Y_1$ and $Y_2$ about \m. 
Let us consider the case where $Y$ informs that it believes that \m{} holds (we write $Y$ to indicate either $Y_1$ or $Y_2$).

\begin{ex}[continues=exa:scenario]
  \textbf{Belief}. Suppose that $Y$ answers \linebreak that \m{} holds with absolute certainty. However, $Y$ has had a history of (random) failures, which means that $X$ does not completely trust $Y$'s reports. In this situation, it seems reasonable for $X$ to derive an \emph{opinion} about the likelihood of \m{} with an upper limit equivalent to its trust in $Y$.
\end{ex}

The above scenario gives us an intuition about a first desideratum concerning discounting opinions, viz.:
\begin{quotation}
  \desideratumdiscount{1}: when the trustworthiness degree of a trustee $Y$ is derived from historical data, if $Y$ informs $X$ that \m{} holds with absolute certainty, $X$ should believe \m{}  as much as it believes $Y$.
\end{quotation}
 
On the other hand, if $Y$ informs $X$ that it is uncertain about \m, this should be directly reflected in $X$'s opinion about \m.

\begin{ex}[continues=exa:scenario]
  \textbf{Uncertainty}. Consider the case \linebreak where $Y$ informs $X$ that it is unable to observe \m. Here, there is  complete uncertainty with regard to the degree of trustworthiness associated with $Y$'s reports about \m.
\end{ex}

We can thus derive an additional desideratum:

\begin{quotation}
  \desideratumdiscount{2}: if $Y$ informs $X$ that it is uncertain about \m, $X$ is also uncertain about \m.
\end{quotation}

In addition, we can identify an  intermediate case, where it is known that the current situation negatively affects the degree of trustworthiness, and where an estimate of this effect can be determined. This is illustrated in the following scenario.

\begin{ex}[continues=exa:scenario]
\textbf{Intermediate evidence}. Suppose \linebreak that $Y$ reports that its opinion about \m{} is not accurate, but there is some evidence in favour of \m{},  and some evidence against it (with some degree of uncertainty)\footnote{An example of this is GPS data, which is known to be inaccurate if you are using civilian equipment \citep{Bisdikian2012}.}. In this case $X$ knows that the data received is somewhat accurate, and can therefore derive a degree of trustworthiness in the information regarding \m{} received from $Y$.
\end{ex}

This illustrates another desideratum regarding the discounting of opinions, namely:

\begin{quotation}
  \desideratumdiscount{3}: when the trustworthiness degree of a trustee $Y$ is derived from historical data, if $Y$ informs $X$ that \m{} holds with some degree of certainty, $X$ should believe \m{} less its trust in $Y$. 
\end{quotation}

There are cases where the queried datum is not evidence about a physical phenomenon, but rather an opinion about another agent.

\begin{ex}[continues=exa:scenario]
\textbf{Reputation}. Suppose that 
sensor $Y_1$ provides information about readings obtained from $Y_{2}$,   an Uninterruptible Power Supply (UPS).. Further suppose that $X$ wants to query $Y_2$ about its battery status, but $X$ has, until now, only obtained its information directly from $Y_{1}$. Given a report from $Y_{2}$,  $X$ can ask $Y_1$ about $Y_2$'s degree of trustworthiness. According to our previous terminology, the message that $Y_1$ sends to $X$ is the subjective --- to $Y_1$ --- trustworthiness degree of $Y_2$.
\end{ex}

Finally, there are cases where $X$ has to integrate different sources of information.

\begin{ex}[continues=exa:scenario]
\textbf{Consensus}. Suppose that $X$ \linebreak queries both $Y_1$ and $Y_2$ about a physical phenomenon \m, and let us suppose that it receives different answers from the two sensors. In this situation, $X$ will search for a consensus between these opinions, and will be biased towards the answer obtained from the more historically accurate sensor.
\end{ex}

This illustrates a desideratum about fusing opinions:

\begin{quotation}
\desideratumfusion{1}: in the case where $X$ receives $n$ opinions about \m{} from $Y_1$, $Y_2$, \ldots $Y_n$, there must exist an operator informing which are $X$'s preferences among each $Y_i$ (i.e. related to their degree of trustworthiness), and $X$'s opinion about \m{} should be derived according to these preferences (i.e. giving more importance to the opinion received from the most preferred trustee).
\end{quotation}

\section{Background}
\label{sec:background-notions}

In the above scenario, we utilised the terms \emph{trust}, \emph{trustworthiness}, and \emph{reputation} which often have different meanings among different pieces of research. It is beyond the scope of this paper to investigate these meanings; the interested reader is referred to \cite{Castelfranchi2010} and \cite{Urbano2013} for an overview.
For the purpose of this paper we consider the notion of \emph{trustworthiness} as the property of an agent --- the \emph{trustee} --- that a \emph{trustor} is connected with, where this property represents the willingness of the trustee to share information accurately with respect to the ground truth (we make a distinction between inaccuracy that is intentional or otherwise). Moreover,  \emph{reputation} is a property which represents the subjective view of an arbitrary trustee's trustworthiness that we obtained  from an agent we can directly communicate with.

Following \citep{Josang2007}  we express both the \emph{degree of trustworthiness} and the \emph{degree of reputation} using SL. This formalism extends probability theory by expressing uncertainty about the probability values themselves, which makes it useful for representing trust degrees. We now proceed to provide a brief overview of SL mainly based on \citep{Josang2001}.

Like Dempster-Shafer evidence theory \citep{Dempster1968,Shafer1976}, SL  operates on a
\emph{frame of discernment}, denoted by $\Theta$. A frame of discernment
contains the set of possible system states, only one of which represents the
actual system state. These are referred to as atomic, or primitive, system states.
The powerset of
$\Theta$, denoted by $2^\Theta$, consists of all possible unions of primitive
states. A non-primitive state may contain other states within it. These are
referred to as substates of the state.

\begin{definition} \label{beliefMassAssignment}
Given a frame of discernment $\Theta$, we can associate a belief mass
assignment $m_\Theta(x)$ with each substate $x \in 2^\Theta$ such that
\begin{enumerate}
  \item $m_\Theta(x) \geq 0$
  \item $m_\Theta(\emptyset) = 0$
  \item $\displaystyle\sum_{x \in 2^\Theta} m_\Theta(x)=1$
\end{enumerate}
\end{definition}

For a substate $x$, $m_\Theta(x)$ is its \emph{belief mass}.

Belief mass is an unwieldy concept to work with. When we speak of belief in
a certain state, we refer not only to the belief mass in the state, but also
to the belief masses of the state's substates. Similarly, when we speak
about disbelief, that is, the total belief that a state is not true, we need
to take substates into account. Finally, SL also introduces the
concept of uncertainty, that is, the amount of belief that might be in a
superstate or a partially overlapping state. These concepts can be 
formalised as follows.

\begin{definition} Given a frame of
discernment $\Theta$ and a belief mass assignment $m_\Theta$ on $\Theta$, we
 define the belief function for a state $x$ as
$$b(x)=\sum_{y \subseteq x} m_\Theta(y) \textrm{ where } x,y \in 2^\Theta$$
The disbelief function as
$$d(x) = \sum_{y \cap x=\emptyset} m_\Theta(y) \textrm{ where } x,y \in
2^\Theta$$
And the uncertainty function as
$$u(x) = \sum_{\tiny{\begin{array}{l}y \cap x \neq \emptyset \\ y \not \subseteq x
\end{array}}} m_\Theta(y) \textrm{ where } x,y \in 2^\Theta$$
\end{definition}

These functions have two important properties. First, they
all range between zero and one. Second, they always sum to one, meaning that
it is possible to deduce the value of one function given the other two. 

Boolean logic operators have SL equivalents. It makes sense to
use these equivalent operators in frames of discernment containing a state and
(some form of) the state's negation. A \emph{focused frame of discernment} is
a binary frame of discernment containing a state and its complement.

\begin{definition} Given $x \in
2^\Theta$, the frame of discernment denoted by $\tilde{\Theta}^x$, which contains
two atomic states, $x$ and $\neg x$, where  $\neg x$ is the complement of $x$
in $\Theta$, is the focused frame of discernment with focus on $x$. 

Let $\tilde{\Theta}^x$ be the focused frame of discernment with
focus on x of $\Theta$. Given a belief mass assignment $m_\Theta$ and the
belief, disbelief and uncertainty functions for $x$ ($b(x)$, $d(x)$ and $u(x)$
respectively), the focused belief mass assignment, $m_{\tilde{\Theta}^x}$ on
$\tilde{\Theta}^x$ is defined as
\begin{eqnarray*}
&& m_{\tilde{\Theta}^x}(x) = b(x) \\
&& m_{\tilde{\Theta}^x}(\neg x) = d(x) \\
&& m_{\tilde{\Theta}^x}(\tilde{\Theta}^x) = u(x) 
\end{eqnarray*}
The focused relative atomicity of $x$ (which approximates the role of a prior probability distribution within probability theory, weighting the likelihood of some outcomes over others) is defined as
$$a_{\tilde{\Theta}^x}(x/\Theta)=[E(x)-b(x)]/u(x)$$
\noindent
where $E(x)$ represents the \emph{expected value} of $x$.

For convenience, the focused relative atomicity of $x$ is often abbreviated $a_{\tilde{\Theta}^x}(x)$ or $a(x)$.
\end{definition}

An opinion consists of the belief, disbelief, uncertainty and relative
atomicity as computed over a focused frame of discernment.

\begin{definition}
Given a focused frame of discernment $\Theta$ containing $x$ and its complement
$\neg x$, and assuming a belief mass assignment $m_\Theta$ with belief,
disbelief, uncertainty and relative atomicity functions on $x$ in $\Theta$ of
$b(x)$,$d(x)$,$u(x)$ and $a(x)$, we define an \emph{opinion} over $x$, written $\omega_x$ as
$$\omega_x \equiv \langle b(x),d(x),u(x),a(x) \rangle$$
\end{definition}

For compactness, J{\o}sang also denotes the various functions as
$b_x$,$d_x$,$u_x$ and $a_x$ in place, and we will follow this notation. Furthermore, given a fixed $a_x$, an opinion $\omega$ can be denoted as a $\langle b_{x},d_{x},u_{x} \rangle$ triple.

Given opinions about two propositions from different frames of discernment, it
is possible to combine them in various ways using SL's various operators, as detailed in \citep{Josang2001,Josang2004,Josang2005,Josang2006,McAnally2004}. 
In this work we concentrate on J{\o}sang's \emph{discount} and \emph{fusion} operators, which we review next.

\begin{definition}[Def. 5, \cite{Josang2006}]
  \label{def:josang-discount}
  Let $A, B$ be two agents where $A$'s opinion about $B$ is expressed as $\omega_B^A = \tuple{b_B^A, d_B^A, u_B^A, a_B^A}$ and let $x$ be a proposition where $B$'s opinion about $x$ is acquired by $A$ as the opinion $\omega_x^B =\linebreak \tuple{b_x^B, d_x^B, u_x^B, a_x^B}$. Let $\omega_x^{A:B} = \tuple{b_x^{A:B}, d_x^{A:B}, u_x^{A:B}, a_x^{A:B}}$ be the opinion such that:

\[
\begin{cases} 
  b_x^{A:B} = b_B^A~b_x^B\\
  d_x^{A:B} = b_B^A~d_x^B\\
  u_x^{A:B} = d_B^A + u_B^A + b_B^A~u_x^B\\
  a_x^{A:B} = a_x^B\\
\end{cases}
\]

\noindent
then $\omega_x^{A:B}$ is called the \emph{uncertainty favouring discounted opinion of $A$}. By using the symbol $\josangdiscountsym$ to designate this operation, we get $\omega_x^{A:B} = \omega_B^A \josangdiscountsym \omega_x^B$.
\end{definition}

\begin{definition}[Thm. 1, \cite{Josang2006}]
  \label{def:josang-fusion}
  Let $\omega_x^A = \linebreak \tuple{b_x^A, d_x^A, u_x^A, a_x^A}$ and $\omega_x^B = \tuple{b_x^B, d_x^B, u_x^B, a_x^B}$ be trust in $x$ from $A$ and $B$ respectively. The opinion $\omega_x^{A\diamond B}= \linebreak \tuple{b_x^{A\diamond B}, d_x^{A\diamond B}, u_x^{A\diamond B}, a_x^{A\diamond B}}$ is then called the consensus between $\omega_x^A$ and $\omega_x^B$, denoting the trust that an imaginary agent $[A, B]$ would have in $x$, as if that agent represented both $A$ and $B$. In case of Bayesian (totally certain) opinions, their relative weight can be defined as $\gamma^{A/B} = \lim{(u_x^B/u_x^A)}$.

  Case I: $u_x^A + u_x^B - u_x^A~u_x^B \neq 0$

  $\begin{cases} 
    b_x^{A\diamond B} = \frac{b_x^A~u_x^B + b_x^B~u_x^A}{u_x^A + u_x^B - u_x^A~u_x^B}\\
    d_x^{A\diamond B} = \frac{d_x^A~u_x^B + d_x^B~u_x^A}{u_x^A + u_x^B - u_x^A~u_x^B}\\
    u_x^{A\diamond B} = \frac{u_x^A~u_x^B}{u_x^A + u_x^B - u_x^A~u_x^B}\\
    a_x^{A\diamond B} = \frac{a_x^A~u_x^B + a_x^B~u_x^A - (a_x^A + a_x^B)~u_x^A~u_x^B}{u_x^A + u_x^B - 2~u_x^A~u_x^B}\\
  \end{cases}$

  Case II: $u_x^A + u_x^B - u_x^A~u_x^B = 0$

  $\begin{cases} 
    b_x^{A\diamond B} = \frac{(\gamma^{A/B}~b_x^A + b_x^B)}{(\gamma^{A/B} + 1)}\\
    d_x^{A\diamond B} = \frac{(\gamma^{A/B}~d_x^A + d_x^B)}{(\gamma^{A/B} + 1)}\\
    u_x^{A\diamond B} = 0\\
    a_x^{A\diamond B} = \frac{(\gamma^{A/B}~a_x^A + a_x^B)}{(\gamma^{A/B} + 1)}\\
  \end{cases}$

By using the symbol `$\josangfusionsym$' to designate this operator, we can write $\omega_x^{A\diamond B} = \omega_x^A \josangfusionsym \omega_x^B$.
\end{definition}

\section{Core Properties and Requirements}
\label{sec:core-prop-requ}

In our scenario, agent $X$ has to determine the trustworthiness degree associated with \m{} received from $Y_i$. $X$ will consider three elements in reaching a decision:

\begin{enumerate}
\item \textbf{trustworthiness:} $X$ has an opinion $\T{i}$ concerning the degree of trustworthiness of $Y_i$;

\item \textbf{certainty:} $Y_i$ communicates that \m{} holds with a degree of certainty $\C{i}$;

\item \textbf{combination:} $X$ has to combine $\T{1}, \ldots, \T{n} $ with (resp.) $\C{1}, \ldots, \C{n}$ in order to achieve an ultimate opinion $\W{} = \fusionop{}(\W{1}, \ldots, \W{n})$ on \m, where $\forall i \W{i} = \optrustconf{\T{i}}{\C{i}}{}$, \ie{} $\W{i}$ is the result of a combination of opinion $\T{i}$ with opinion $\C{i}$, and each opinion $\W{i} = \optrustconf{\T{i}}{\C{i}}{}$ has associated a weight $K_i = f(\T{i})$ for some function $f(\cdot)$. 
\end{enumerate}

In particular, the three desiderata for discounting opinions (\desideratumdiscount{1}, \desideratumdiscount{2}, \desideratumdiscount{3}), give rise to the following  \emph{requirements for  discounting}: 
\begin{itemize}
\item[\requirementdiscount{1}] if $\C{i}$ is pure belief, then $\W{i} = \T{i}$;
\item[\requirementdiscount{2}] if $\C{i}$ is completely uncertain, then $\W{i} = \C{i}$ (\ie{} $\langle 0,0,1\rangle$);
\item[\requirementdiscount{3}] the degree of belief of $\W{i}$ is always less than or equal to the degree of belief of $\T{i}$;
\end{itemize}

\noindent
and the desideratum about fusing opinion (\desideratumfusion{1}) gives rise to the following \emph{requirements for fusion}:
\begin{itemize}
\item[\requirementfusion{1}] if $\forall i, j$ $K_i = K_j$, then $\W{} = \fusionop{}(\W{1}, \ldots, \W{n})$ is the opinion resulting from the average of $\W{1}, \ldots, \W{n}$ ;
\item[\requirementfusion{2}] if $\exists i$ \suchthat{} $K_i = 0$, then $\fusionop{}(\W{1}, \ldots, \W{n}) = \linebreak \fusionop{}(\W{1}, \ldots, \W{i-1}, \W{i+1}, \ldots, \W{n})$.
\end{itemize}

While there is a direct correspondence between the de\-si\-de\-ra\-tum for discounting opinion (\desideratumdiscount{1}, \desideratumdiscount{2}, \desideratumdiscount{3}) and the requirements for discounting (\requirementdiscount{1}, \requirementdiscount{2}, \requirementdiscount{3}), the desideratum about fusing opinion \desideratumfusion{1} gives rise to two (loose) requirements, \requirementfusion{1} and \requirementfusion{2}.
Note  that the combination requirements describe the same ``prudent'' behaviour as was presented in Example \ref{ex:scenario}, in particular in the ``belief'' scenario. Indeed even if $X$ is highly confident in a specific context, this confidence cannot increase the trust degree over the base trustworthiness degree.

Following \citep{McAnally2004,Josang2006}, we utilise SL to instantiate trustworthiness and confidence, and seek to compute their combination through SL operations\footnote{Hereafter each opinion will have a fixed relative atomicity of $\frac{1}{2}$. This assumption will be relaxed in future works.}. In doing so, we must therefore consider the following inputs and requirements:

\begin{enumerate}
\item $\opinionset = \set{\opinion{} \in \realset^3 | 0 \leq \belief{} \leq 1 \mbox{ and } 0 \leq \disbelief{} \leq 1 \mbox{ and } 0 \leq \uncertainty{} \leq 1 \mbox{ and } \belief{} + \disbelief{} + \uncertainty{} = 1}$;
\item $\T{i} = \opinion{\T{i}}$ derived by statistical observations (\eg{} \citep{McAnally2004});
\item $\C{i} = \opinion{\C{i}}$ is the opinion received from $Y_i$ about \m;
\item $\optrustconf{}{}{}: \opinionset \times \opinionset \mapsto \opinionset$;
\item $\fusionop{}: \opinionset^n \mapsto \opinionset$;
\item $\W{i} = \left\{ 
    \begin{array}{l l}
      \T{i} & \mbox{if}~\C{i} = \tuple{1, 0, 0}\\
      \C{i} & \mbox{if}~\C{i} = \tuple{0, 0, 1}\\
    \end{array}
  \right.$

  further requiring that  $\belief{\W{i}} \leq \belief{\T{i}}$;
\item given $\W{} = \fusionop{}(\W{1}, \ldots, \W{n})$, if $\forall i,j \in \set{1, \ldots, n} K_i = K_j$, then 
  \begin{itemize}
  \item $\belief{\W{}} = \displaystyle{\frac{1}{n} \sum_{i = 1}^n \belief{\W{i}}}$
  \item $\disbelief{\W{}} = \displaystyle{\frac{1}{n} \sum_{i = 1}^n \disbelief{\W{i}}}$
  \item $\uncertainty{\W{}} = \displaystyle{\frac{1}{n} \sum_{i = 1}^n \uncertainty{\W{i}}}$
  \end{itemize}
\item if $\exists i$ \suchthat{} $K_i = 0$, then $\fusionop{}(\W{1}, \ldots, \W{n}) = \linebreak \fusionop{}(\W{1}, \ldots, \W{i-1}, \W{i+1}, \ldots, \W{n})$.
\end{enumerate}

Although any function $f(\cdot)$ can be used for deriving $K_i$, hereafter we will consider $\T{i}$'s expected value, i.e. $K_i = \belief{T_i} + \frac{\uncertainty{T_i}}{2}$.

Since 1--5 above are inputs, we concentrate on the constraints expressed by 6, 7 and 8, which require us to consider the problem of how to combine the degree of trustworthiness with the degree of certainty. Existing work, such as \citep{McAnally2004,Castelfranchi2010,Urbano2013}, concentrate on computing $\T{}$. 

We begin by noting --- as illustrated in Table \ref{tab:comparison-josang-operators} --- that no set of  operators provided by SL  \citep{Josang2001,Josang2004,Josang2005,Josang2006,McAnally2004,Josang2008} satisfies the combination requirements described previously.

\begin{table}[h]
  \centering
  \begin{tabular}{l | c | c | c | c | c }
                   & \multicolumn{5}{c}{Requirement satisfied?}\\
                   & \multicolumn{3}{c|}{Discount req.} & \multicolumn{2}{c }{Fusion req.}\\
    Operator       & \requirementdiscount{1} & \requirementdiscount{2} & \requirementdiscount{3} & \requirementfusion{1} & \requirementfusion{2} \\
    \hline
    Addition ($+$) & \cellcolor{lightgray}No & \cellcolor{lightgray}No & \cellcolor{lightgray}No & \cellcolor{lightgray}No & \cellcolor{lightgray}No\\
    Subtraction ($-$) & \cellcolor{lightgray}No & \cellcolor{lightgray}No & Yes & \cellcolor{lightgray}No & \cellcolor{lightgray}No\\
    Multiplication ($\cdot$) & \cellcolor{lightgray}No & \cellcolor{lightgray}No & \cellcolor{lightgray}No & \cellcolor{lightgray}No & \cellcolor{lightgray}No\\
    Division ($/$) & \cellcolor{lightgray}No & \cellcolor{lightgray}No & \cellcolor{lightgray}No & \cellcolor{lightgray}No & \cellcolor{lightgray}No\\
    Comultiplication ($\sqcup$) & \cellcolor{lightgray}No & \cellcolor{lightgray}No & \cellcolor{lightgray}No & \cellcolor{lightgray}No & \cellcolor{lightgray}No\\
    Codivision ($\sqcap$) & \cellcolor{lightgray}No & \cellcolor{lightgray}No & \cellcolor{lightgray}No & \cellcolor{lightgray}No & \cellcolor{lightgray}No \\
    Discounting ($\otimes$) & \cellcolor{lightgray}No & Yes & Yes & \cellcolor{lightgray}No & \cellcolor{lightgray}No\\
    Cumulative fusion ($\oplus$) & \cellcolor{lightgray}No & \cellcolor{lightgray}No & \cellcolor{lightgray}No & \cellcolor{lightgray}No & \cellcolor{lightgray}No\\
    Averaging fusion ($\underline{\oplus}$) & \cellcolor{lightgray}No & \cellcolor{lightgray}No & \cellcolor{lightgray}No & Yes & \cellcolor{lightgray}No \\
    Cumulative unfusion ($\ominus$) & \cellcolor{lightgray}No & \cellcolor{lightgray}No & \cellcolor{lightgray}No & \cellcolor{lightgray}No & \cellcolor{lightgray}No\\
    Averaging unfusion ($\underline{\ominus}$) & \cellcolor{lightgray}No & \cellcolor{lightgray}No & \cellcolor{lightgray}No & \cellcolor{lightgray}No & \cellcolor{lightgray}No \\
  \end{tabular}
  \caption{J{\o}sang operators and the satisfaction of the five combination requirements}
  \label{tab:comparison-josang-operators}
\end{table}

In the next section we describe our proposals for the discount --- $\optrustconf{}{}{}$ --- and consensus --- $\fusionop{}$ --- operators in order to satisfy the above five requirements. 


\section{The  Operators}
\label{sec:core-properties}


\subsection{A Na\"ive Discount Operator}
\label{sec:naive-disc-oper}

As suggested by an anonymous referee of a preliminary version of \citep{Cerutti2013}, a very na\"ive operator satisfying the requirements \requirementdiscount{1}, \requirementdiscount{2}, and \requirementdiscount{3} is the following.

\begin{definition}
  \label{def:naive}
  Given the two opinions $\T{} = \opinion{\T{}}$ and $\C{} = \opinion{\C{}}$, the na\"ive-discount of $\C{}$ by $\T{}$ is $\W{} = \optrustconf{\T{}}{\C{}}{n}$, where:
  \begin{itemize}
  \item $\belief{\W{}} = \belief{\C{}} \cdot \belief{\T{}}$;
  \item $\disbelief{\W{}} = \belief{\C{}} \cdot \disbelief{\T{}} + \disbelief{\C{}}$;
  \item $\uncertainty{\W{}} = \belief{\C{}} \cdot \uncertainty{\T{}} + \uncertainty{\C{}}$.
  \end{itemize}
\end{definition}

The following proposition shows that the na\"ive discount operator fulfils the first three requirements.

\begin{restatable}{propn}{propnaive}
\label{prop:naive}
Given the two opinions $\T{} = \opinion{\T{}}$ and $\C{} = \opinion{\C{}}$, and $\W{} = \optrustconf{\T{}}{\C{}}{n}$, the na\"ive-discount of $\C{}$ by $\T{}$, then:
\begin{itemize}
\item[$i$.] $\W{} \in \opinionset$;
\item[$ii$.] if $\C{} = \tuple{1, 0, 0} $, then $\W{} = \T{}$, \ie{} requirement \requirementdiscount{1};
\item[$iii$.] if $\C{} = \tuple{0, 0, 1} $, then $\W{} = \C{}$, \ie{} requirement \requirementdiscount{2};
\item[$iv$.] $\belief{\W{}} \leq \belief{\T{}}$, \ie{} requirement \requirementdiscount{3}.
\end{itemize}
\end{restatable}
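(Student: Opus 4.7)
The proof is essentially a direct computation from Definition~\ref{def:naive}, using only the constraint that $T, C \in \opinionset$, i.e.\ that each triple has non-negative components summing to one. I would organise it as four short arguments, one per item.

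For $(i)$, I would first check non-negativity componentwise: $\belief{\W{}}$, $\disbelief{\W{}}$, and $\uncertainty{\W{}}$ are each sums/products of non-negative quantities. Then I would verify that all three lie in $[0,1]$ and sum to $1$ in a single computation, namely
\[
\belief{\W{}} + \disbelief{\W{}} + \uncertainty{\W{}}
= \belief{\C{}}\bigl(\belief{\T{}} + \disbelief{\T{}} + \uncertainty{\T{}}\bigr) + \disbelief{\C{}} + \uncertainty{\C{}}
= \belief{\C{}} + \disbelief{\C{}} + \uncertainty{\C{}} = 1,
\]
using $\T{}, \C{} \in \opinionset$. The upper bound $\leq 1$ for each coordinate is then automatic because all three are non-negative and their sum is $1$.

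For $(ii)$ and $(iii)$, I would simply substitute the special values $\C{} = \tuple{1,0,0}$ and $\C{} = \tuple{0,0,1}$ respectively into Definition~\ref{def:naive} and read off $\W{} = \T{}$ (resp.\ $\W{} = \C{}$) coordinate by coordinate. These are one-line verifications.

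For $(iv)$, I would observe that $\belief{\W{}} = \belief{\C{}}\cdot\belief{\T{}}$ and that $\belief{\C{}} \in [0,1]$ since $\C{} \in \opinionset$; hence $\belief{\W{}} \leq \belief{\T{}}$ immediately.

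Since the calculations are completely routine and the constraints from $\opinionset$ are applied directly, I do not expect any genuine obstacle; the only care needed is to invoke $\belief{\C{}} + \disbelief{\C{}} + \uncertainty{\C{}} = 1$ and $\belief{\T{}} + \disbelief{\T{}} + \uncertainty{\T{}} = 1$ in the right place so that the sum collapses cleanly in part $(i)$.
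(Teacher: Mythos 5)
Your proposal is correct and follows essentially the same route as the paper's proof: direct substitution into Definition~\ref{def:naive} for $(ii)$ and $(iii)$, and the observation that $\belief{\C{}} \leq 1$ for $(iv)$ (which the paper phrases as a one-line contradiction). The only difference is in $(i)$, where you obtain the upper bounds $\belief{\W{}}, \disbelief{\W{}}, \uncertainty{\W{}} \leq 1$ from non-negativity together with the sum-to-one identity, whereas the paper verifies each upper bound separately with a case split on whether $\belief{\C{}} = 0$; your version is slightly more economical and equally valid.
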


However, one of the limits of this na\"ive operator is that if $\C{} = \tuple{0, 1, 0}$, the discounted opinion is pure disbelief. This means that, regardless of the trustworthiness opinion an agent has on a source of information, if this source of information informs the agent that it is certain in its disbelief of a message, then the result of the discounting action is complete disbelief.

Although this seems to be reasonable in some contexts, e.g.~in the merging of confidence and trustworthiness opinions (as discussed in \cite{Cerutti2013}), it can be questionable in the context of discounting opinions \citep{Kaplan13personal} --- 
intuition suggests discounting an opinion should (generally) raise the degree of uncertainty, while the na\"ive operator reduces it.

In the following section we introduce a family of discount operators, which can provide varying degree of uncertainty when discounting opinions.

\subsection{A Family of Graphical Discount Operators}
\label{sec:graph-disc-oper}

As discussed in \citep{Josang2001}, a Subjective Logic opinion admits a geometrical representation inside a triangle, and, as shown in \citep{Josang2005}, operators can be defined in order to satisfy graphical properties\footnote{\citep{Josang2005} shows how the deduction operator affects the space of possible derived opinions.}.

As detailed in Appendix \ref{sec:geom-subj-logic}, using the constraint that an opinion's belief, disbelief and uncertainty must sum to 1, we can flatten the 3-dimension space \opinionset{}  into a 2-dimension space (Cartesian space). 

\begin{figure}[h]
  \centering
  \includegraphics[scale=0.7]{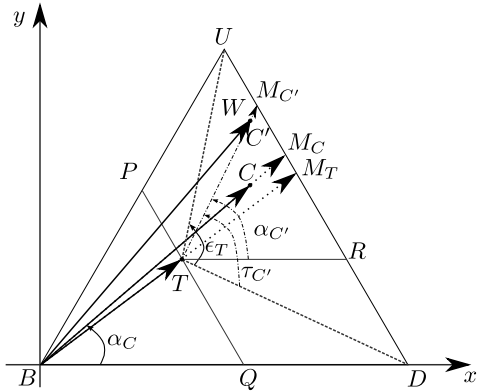}
  \caption{Projection of the certainty opinion and its combination with a trustworthiness opinion}
  \label{fig:vector-confidence-combination}
\end{figure}

Figure \ref{fig:vector-confidence-combination} depicts the intuition behind the family of discount operators we introduce in this paper. In the figure, the point \T{} represents the subjective logic opinion regarding the trustworthiness degree of the source of information. Given this point, the four-sided figure $PQDU$ represents the \emph{admissible space of opinions}, where $P \triangleq \tuple{\belief{\T{}}, 0, 1 - \belief{\T{}}}, Q \triangleq \tuple{\belief{\T{}}, 1 - \belief{\T{}}, 0}, D \triangleq \tuple{0, 1, 0}, U \triangleq \tuple{0, 0, 1}$. In other words, the opinions that are inside the admissible space clearly satisfy requirement \requirementdiscount{3}, as their degree of belief cannot be greater than $\belief{\T{}}$.

\begin{definition}
  \label{defn:admissiblespace}
  Given an opinion $\T{} = \opinion{\T{}}$, the \emph{admissible space of opinions given \T{}} is $\admissibleopinionset{\T{}} = \set{X \in \opinionset{} | \belief{X} \leq \belief{\T{}}}$.
\end{definition}

We can easily show that the line between $P$ and $Q$ as shown in Fig. \ref{fig:vector-confidence-combination} delimits the admissible space of opinions for $\T{}$.

\begin{restatable}{propn}{propadmissiblespace}
\label{propn:admissible-space}
Given an opinion $\T{} = \opinion{\T{}}$, and its Cartesian representation, the four-sided figure $PQDU$ represents $\admissibleopinionset{\T{}}$, where:
\begin{itemize}
\item $P \triangleq \tuple{\belief{\T{}}, 0, 1-\belief{\T{}}}$; and
\item $Q \triangleq \tuple{\belief{\T{}}, 1-\belief{\T{}}, 0}$.
\end{itemize}
\end{restatable}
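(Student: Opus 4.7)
The plan is to exploit the simplex structure of $\opinionset$: the constraint $\belief{X} + \disbelief{X} + \uncertainty{X} = 1$ with nonnegative components makes $\opinionset$ a $2$-simplex (triangle) with vertices $B = \tuple{1,0,0}$, $D = \tuple{0,1,0}$, and $U = \tuple{0,0,1}$. By Definition \ref{defn:admissiblespace}, $\admissibleopinionset{\T{}}$ is the intersection of this triangle with the closed half-space $\set{X : \belief{X} \leq \belief{\T{}}}$. Since both sets are convex, so is the intersection, and a bounded convex polygon is determined by its extreme points; the task therefore reduces to enumerating these extreme points and verifying that they are exactly $P$, $Q$, $D$, $U$ traversed in the claimed cyclic order.

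Next I would carry out this enumeration. Of the three triangle vertices, $D$ and $U$ satisfy $\belief{X}=0\leq \belief{\T{}}$ and are retained, while $B$ has $\belief{B}=1>\belief{\T{}}$ (assuming $\belief{\T{}}<1$) and is cut off. The bounding hyperplane $\belief{X}=\belief{\T{}}$ then meets the two triangle edges incident to $B$ at exactly one point each: parameterising $\overline{BU}$ as $\tuple{1-t,0,t}$ and $\overline{BD}$ as $\tuple{1-t,t,0}$ and solving $1-t=\belief{\T{}}$ yields precisely $P$ and $Q$. Hence the extreme points of $\admissibleopinionset{\T{}}$ are exactly $\set{P,Q,D,U}$. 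Traversing the boundary they appear in the cyclic order $P\to Q\to D\to U\to P$: the first edge lies on the cutting hyperplane, while $\overline{QD}$, $\overline{DU}$ and $\overline{UP}$ lie on the three sides of the original triangle. This exhibits $\admissibleopinionset{\T{}}$ as the convex quadrilateral $PQDU$.

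The only item that needs an explicit comment rather than routine bookkeeping is the degenerate case $\belief{\T{}}\in\set{0,1}$. When $\belief{\T{}}=0$, we have $P=U$ and $Q=D$, so the ``quadrilateral'' collapses to the segment $\overline{UD}$, which agrees with $\set{X\in\opinionset : \belief{X}=0}$. When $\belief{\T{}}=1$, $P$ and $Q$ both coincide with $B$, and the figure opens out to the full triangle $\opinionset$ itself, again in agreement with the admissible set. I do not anticipate any genuine obstacle; the whole argument is a convex-geometry bookkeeping exercise once the simplex picture of $\opinionset$ is in place. The one delicate step to write out carefully is the identification of the two hyperplane--edge intersections with the points $P$ and $Q$ stated in the proposition, since everything else (vertex membership and the boundary cycle) follows directly from the half-space construction.
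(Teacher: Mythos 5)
Your proof is correct and follows essentially the same route as the paper's: both cut the opinion triangle with the half-space $\belief{X}\le\belief{\T{}}$ and identify $P$ and $Q$ as the points where the bounding line meets the edges $\disbelief{}=0$ and $\uncertainty{}=0$. The only difference is presentational --- the paper passes through the Cartesian representation of Prop.~\ref{propn:cartesian-transformation} and exhibits only the two limit points, whereas you argue directly in barycentric coordinates and additionally supply the convexity/extreme-point bookkeeping and the degenerate cases $\belief{\T{}}\in\set{0,1}$, which the paper leaves implicit.
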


The idea behind the discount operator is as follows. An opinion $\C{}$ should be projected into the admissible space of opinions given \T{}. According to Fig. \ref{fig:vector-confidence-combination}, discounting the opinion $\C{}$ with the opinion $\T{}$ means that we project $\C{}$ into $\admissibleopinionset{\T{}}$ thus achieving a new opinion $\C{}'$ which is the result of the discounting operator.
We consider only a linear projection in this work, but more complex functions can be easily envisaged. In other words, the idea is that $|\vv{B\C{}}| \propto |\vv{\T{}\C{}'}|$, as well as the angle $\alpha_{\C{}} \propto \alpha_{\C{}'}$.


The following definition describes the family of discount operators obtained following the above intuition. Each member of the family is identified by a specific value of $\alpha_{\C{}'}$.

\begin{definition}
\label{defn:combination}
  Given the two opinions $T = \opinion{T}$ and $C = \opinion{C}$, the graphical-discount of $C$ by $T$ is $W = \optrustconffam{T}{C}{}$, where:
  \begin{itemize}
  \item $\uncertainty{W} = \uncertainty{T} + \sin(\alpha_{C'}) |\vv{TC'}|$
  \item $\disbelief{W} = \disbelief{T} + (\uncertainty{T} - \uncertainty{W}) \cos(\frac{\pi}{3}) + \cos(\alpha_{C'}) \sin(\frac{\pi}{3}) |\vv{TC'}|$
  \end{itemize}

  \noindent
  where\footnote{With reference to Fig. \ref{fig:vector-confidence-combination},  $\alpha_C \triangleq \ang{CBD}$, $\beta_T \triangleq \ang{TDB}$, $\gamma_T \triangleq \ang{TDU}$, $\delta_T \triangleq \ang{TUD}$, $\epsilon_T \triangleq \ang{DTU}$.}:
  \begin{itemize}
  \item $\displaystyle{\frac{\alpha_C~\epsilon_T}{\frac{\pi}{3}} - \beta_T \leq \alpha_{\C{}'} \leq \epsilon_{\T{}} - \beta_{\T{}}}$
  \item $\displaystyle{\alpha_C} = \left\{
        \begin{array}{l l}
          0 & \mbox{if } \belief{C} = 1\\
          \displaystyle{\arctan\left(\frac{\uncertainty{C}~\sin(\frac{\pi}{3})}{\disbelief{C} + \uncertainty{C}~\cos(\frac{\pi}{3})}\right)} & \mbox{otherwise}\\
        \end{array} \right.$
  \item $\displaystyle{\beta_T } = \left\{
      \begin{array}{l l}
        \displaystyle{\frac{\pi}{3}} & \mbox{if } \disbelief{T} = 1\\
        \displaystyle{\arctan\left(\frac{\uncertainty{T}~\sin(\frac{\pi}{3})}{1-(\disbelief{T}+\uncertainty{T}~\cos(\frac{\pi}{3}))}\right)} & \mbox{otherwise}\\
      \end{array}\right.$
  \item $\displaystyle{\gamma_T = \frac{\pi}{3}} - \beta_T$
  \item $\displaystyle{\delta_T = }\left\{
        \begin{array}{l l}
          0 & \mbox{if } \uncertainty{T} = 1\\
          \displaystyle{\arcsin\left(\frac{\belief{T}}{|\vv{TU}|}\right)} & \mbox{otherwise}\\
        \end{array}\right.$
  \item $\epsilon_T = \pi - \gamma_T - \delta_T$

  \item $\displaystyle{|\vv{TC'}| = \frac{|\vv{BC}|}{|\vv{BM_{C}}|} |\vv{TM_{C'}}| =}$ \\ 
$=\displaystyle{r_C~|\vv{TM_{C'}}|}$
  \end{itemize}

  \noindent
  with $r_C = \frac{|\vv{BC}|}{|\vv{BM_C}|}$, and 

  \noindent
  $|\vv{TM_{C'}}| = \left\{
    \begin{array}{l l}
      2~\belief{T} & \mbox{if } \displaystyle{\alpha_{C'} = \frac{\pi}{2}}\\
      \displaystyle{\frac{2}{\sqrt{3}}~\uncertainty{T}} & \mbox{if } \displaystyle{\alpha_{C'} = - \frac{\pi}{3}}\\
      \displaystyle{\frac{2}{\sqrt{3}}~(1 - \uncertainty{T})} & \mbox{if } \displaystyle{\alpha_{C'} = \frac{2}{3} \pi}\\
      \displaystyle{\frac{2 \sqrt{\tan^2(\alpha_{C'}) + 1}}{| \tan(\alpha_{C'}) + \sqrt{3} | } ~\belief{T}} & \mbox{otherwise}\\
    \end{array}\right.$
\end{definition}

We can now show that this family of operators satisfy the requirements \requirementdiscount{1}, \requirementdiscount{2}, and \requirementdiscount{3}.

\begin{restatable}{thm}{thmfamilydiscount}
\label{thm:familydiscount}
  Given $\T{} = \opinion{\T{}}$, $C = \opinion{\C{}}$, and $\W{} = \optrustconffam{\T{}}{\C{}}{}$, then:
  \begin{itemize}
  \item[i.] $\W{} = \opinion{\W{}}$ is an opinion;
  \item[ii.] if $\C{} = \tuple{1, 0, 0}$, then $\W{} = \T{}$;
  \item[iii.] if $\C{} = \tuple{0, 0, 1}$, then $\W{} = \C{}$;
  \item[iv.] $\belief{\W{}} \leq \belief{\T{}}$.
  \end{itemize}
\end{restatable}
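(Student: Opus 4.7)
The plan is to reduce all four claims to a single geometric fact: the construction in Definition \ref{defn:combination} places the output point $\W{}$ inside the admissible space $\admissibleopinionset{\T{}}$. Once this is established, (i) follows from $\admissibleopinionset{\T{}} \subseteq \opinionset$, (iv) is immediate from Definition \ref{defn:admissiblespace}, and claims (ii) and (iii) will be verified by direct substitution in the boundary cases.

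For (i) and (iv), I would read the formulas for $\uncertainty{\W{}}$ and $\disbelief{\W{}}$ as placing $\W{}$ on the ray from $\T{}$ at angle $\alpha_{C'}$ (measured in the Cartesian picture underlying Fig.~\ref{fig:vector-confidence-combination}), at distance $|\vv{TC'}| = r_C\,|\vv{TM_{C'}}|$. The four sub-cases for $|\vv{TM_{C'}}|$ are precisely the signed distances from $\T{}$ to the four sides of the quadrilateral $PQDU$ intersected by that ray: the special angles $\frac{\pi}{2}$, $-\frac{\pi}{3}$, $\frac{2\pi}{3}$ correspond to rays parallel to a coordinate axis of the triangle (giving one of the sides $PQ$, $UD$, $PU$ as the unique intersection), and the generic case is the standard line--line intersection. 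Scaling by $r_C \in [0,1]$ (since $C$ lies between $B$ and $M_C$ on the ray from the belief vertex) keeps $\W{}$ inside $PQDU$; together with the bounds on $\alpha_{C'}$, which restrict the direction from $\T{}$ to those whose rays hit the admissible region, this shows $\W{} \in \admissibleopinionset{\T{}}$. Hence $\disbelief{\W{}},\uncertainty{\W{}} \geq 0$ and $\belief{\W{}} = 1 - \disbelief{\W{}} - \uncertainty{\W{}} \in [0, \belief{\T{}}]$, so $\W{} \in \opinionset$, proving (i), and $\belief{\W{}} \leq \belief{\T{}}$, proving (iv) via Proposition \ref{propn:admissible-space}.

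For (ii), substituting $\C{} = \tuple{1, 0, 0}$ forces $C$ to coincide with the belief vertex $B$; hence $|\vv{BC}| = 0$, $r_C = 0$, and therefore $|\vv{TC'}| = 0$ regardless of the admissible $\alpha_{C'}$. The trigonometric correction terms in Definition \ref{defn:combination} vanish, giving $\uncertainty{\W{}} = \uncertainty{\T{}}$ and $\disbelief{\W{}} = \disbelief{\T{}}$, \ie{} $\W{} = \T{}$. For (iii), substituting $\C{} = \tuple{0, 0, 1}$ puts $C$ at the uncertainty vertex $U$; the arctan formula gives $\alpha_C = \frac{\pi}{3}$, which collapses the permitted range of $\alpha_{C'}$ to the single value $\epsilon_T - \beta_T$ (the direction from $\T{}$ to $U$). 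A short calculation then yields $r_C = 1$ and $|\vv{TM_{C'}}| = |\vv{TU}|$, so $\W{} = U = \C{}$.

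The main obstacle will be making the geometric claim in the second paragraph fully rigorous: verifying that the piecewise formula for $|\vv{TM_{C'}}|$ really does select the intersection with the appropriate side of $PQDU$ for every feasible $\alpha_{C'}$, and carefully handling the degenerate positions of $\T{}$ (on a vertex or edge of the opinion triangle) where one or more of $\beta_T, \gamma_T, \delta_T, \epsilon_T$ sit at the endpoints of their natural ranges and the guards in the piecewise definitions of $\alpha_C$, $\beta_T$, and $\delta_T$ switch on. The remaining algebra is trigonometric bookkeeping on the $30$--$60$--$90$ geometry of the equilateral opinion triangle, which can be relegated to the appendix alongside the other geometric derivations.
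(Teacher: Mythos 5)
Your proposal is correct in substance, and for parts (ii) and (iii) it coincides with the paper's proof (both hinge on $r_C=0$ when $\C{}=\tuple{1,0,0}$, and on the collapse of the admissible range of $\alpha_{\C{}'}$ to the single direction $\epsilon_{\T{}}-\beta_{\T{}}$ with $r_C=1$ when $\C{}=\tuple{0,0,1}$; your observation that this direction points at $U$ even lets you skip the explicit trigonometric evaluation the paper carries out). For parts (i) and (iv) you take a genuinely different route: you derive both from the single geometric containment $\W{}\in\tri{DTU}\subseteq PQDU=\admissibleopinionset{\T{}}$, whereas the paper never establishes that containment --- it proves (i) by expanding $\uncertainty{\W{}}+\disbelief{\W{}}$ algebraically, using $\sin(\alpha_{\C{}'})+\sqrt{3}\cos(\alpha_{\C{}'})>0$ on the admissible range to reduce the inequality to $\uncertainty{\T{}}+\disbelief{\T{}}+r_C\,\belief{\T{}}\leq 1$, and proves (iv) by a contradiction that likewise bottoms out in $r_C<0$. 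Your approach buys a uniform explanation and, if completed, would also explicitly deliver $\disbelief{\W{}}\geq 0$ and $\uncertainty{\W{}}\geq 0$, which the paper's computation does not separately verify; the paper's approach buys brevity precisely because it avoids the lemma you flag as your ``main obstacle.'' Two cautions if you carry the plan through: first, your identification of the four sub-cases of $|\vv{TM_{C'}}|$ with the four sides of $PQDU$ is not quite right --- the generic case and the $\alpha_{\C{}'}=\frac{\pi}{2}$ case are both the distance from $\T{}$ to the line $DU$ (one checks $2-y_T-\sqrt{3}x_T=2\belief{\T{}}$, so the generic formula is $2\belief{\T{}}/|\sin(\alpha_{\C{}'})+\sqrt{3}\cos(\alpha_{\C{}'})|$), while the cases $\alpha_{\C{}'}=-\frac{\pi}{3}$ and $\alpha_{\C{}'}=\frac{2}{3}\pi$ are the directions parallel to $DU$, which the bounds on $\alpha_{\C{}'}$ only permit when $\T{}$ itself lies on $DU$, so they are distances along $DU$ to $D$ and to $U$ respectively. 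Second, the containment you actually get is in the triangle $\tri{DTU}$ (with $\T{}$ on the segment $PQ$), which is strictly smaller than $PQDU$; that is enough for (i) and (iv) but you should state it that way rather than via the quadrilateral.
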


In order to  empirically examine the difference among members of this family of operators, let us define the following three graphical discount operators. The first,  \optrustconf{}{}{1}, considers the widest range of $\alpha_{\C{}'}$ possible, projecting $\C{}$ in the triangle $\tri{DTU}$. This operator is a ``geometrical counterpart'' to the na\"ive operator --- projecting $\C{}$ in $\tri{DTU}$ could result in an opinion with less uncertainty than $\T{}$. The second operator, \optrustconf{}{}{2}, raises the uncertainty of the discount opinion projecting $\C{}$ in the triangle $\tri{RTU}$. The third operator, \optrustconf{}{}{3}, considers the triangle $\tri{STU}$, where $S$ is the intersection of the line $DU$ with the bisector of the angle $\epsilon_T$ (this point is not shown in Fig. \ref{fig:vector-confidence-combination}).

\begin{definition}
  \label{def:graphicaldiscountoperators}
  Given the two opinions $T = \opinion{T}$ and $C = \opinion{C}$:
  \begin{itemize}
  \item $\optrustconf{}{}{1}$ is $\optrustconffam{}{}{}$ with $\displaystyle{\alpha_{\C{}'} = \frac{\alpha_{\C{}} \epsilon_{\T{}}}{\frac{\pi}{3}} - \beta_{\T{}}}$;
  \item $\optrustconf{}{}{2}$ is $\optrustconffam{}{}{}$ with $\displaystyle{\alpha_{\C{}'} = \frac{\alpha_{\C{}} (\epsilon_{\T{}} - \beta_{\T{}})}{\frac{\pi}{3}}}$;
  \item $\optrustconf{}{}{3}$ is $\optrustconffam{}{}{}$ with $\displaystyle{\alpha_{\C{}'} = \frac{\alpha_{\C{}} \frac{\epsilon_{\T{}}}{2}}{\frac{\pi}{3}} + \frac{\epsilon_{\T{}}}{2}- \beta_{\T{}}}$.
  \end{itemize}
\end{definition}

\subsection{The Graphical Fusion Operator}
\label{sec:graph-fusi-oper}

In \citep{Cerutti2013a} we introduced a fusion operator which satisfies requirements \requirementfusion{1} and \requirementfusion{2}.
Let us suppose we have $n$ opinions $W_1, W_2, \ldots, W_n$ derived using an operator $\circ$ \suchthat{} $\forall i \in \set{1, \ldots, n}, W_i = T_i \circ C_i$.  Intuitively, the fused opinion $\fusionop{}(W_1, W_2, \ldots, W_n)$ we want to obtain is the  ``balanced'' centroid of the polygon determined by the $n$ opinions.

\begin{definition}
  \label{def:graphicalfusion}
  Given the opinions $T_1, T_2, \ldots, T_n, \linebreak C_1, C_2, \ldots, C_n, W_1, W_2, \ldots, W_n$ \suchthat{} $\forall i \in \set{1\ldots n}, W_i = T_i \circ C_i$, the opinion resulting from the fusion of opinions $W_1, W_2, \ldots, W_n$ is \linebreak $\opinion{\fusionop{1}(W_1, \ldots, W_n)}$ where:

\begin{itemize}
\item $\displaystyle{\belief{\fusionop{1}(W_1, \ldots, W_n)} = \frac{1}{\sum_{i=1}^n K_i} \left(\sum_{i=1}^n K_i~ \belief{W_i}\right)}$
\item $\displaystyle{\disbelief{\fusionop{1}(W_1, \ldots, W_n)} = \frac{1}{\sum_{i=1}^n K_i} \left(\sum_{i=1}^n K_i~ \disbelief{W_i}\right)}$
\item $\displaystyle{\uncertainty{\fusionop{1}(W_1, \ldots, W_n)} = \frac{1}{\sum_{i=1}^n K_i} \left(\sum_{i=1}^n K_i~ \uncertainty{W_i}\right)}$
\end{itemize}
\end{definition}

This definition of $\fusionop{1}$ satisfies the requirements \requirementfusion{1} and \requirementfusion{2}. Moreover, the following proposition shows that \linebreak $\fusionop{1}(\W{1},\ldots,\W{n})$ is an opinion, and its Cartesian representation is the balanced centroid of the polygon identified by the points $\W{1}, \ldots, \W{n}$.

\begin{restatable}{propn}{fusionprop}
  \label{propn:fusion}
  Given the opinions $T_1, T_2, \ldots, T_n,$ $C_1, \linebreak C_2, \ldots, C_n,$ $W_1, W_2, \ldots, W_n$ \suchthat{} $\forall i \in \set{1\ldots n}, W_i = T_i \circ C_i$, and  $\opinion{\fusionop{1}(W_1, \ldots, W_n)}$ the opinion resulting from the fusion of opinions $W_1, W_2, \ldots, \linebreak W_n$, then:

  \begin{itemize}
  \item[i.] $\opinion{\fusionop{1}(W_1, \ldots, W_n)}$ is an opinion
  \item[ii.] $\displaystyle{x_{\fusionop{1}(W_1, \ldots, W_n)} = \frac{1}{\sum_{i=1}^n K_i} \left(\sum_{i=1}^n K_i~ x_{W_i}\right)}$
  \item[iii.] $\displaystyle{y_{\fusionop{1}(W_1, \ldots, W_n)} = \frac{1}{\sum_{i=1}^n K_i} \left(\sum_{i=1}^n K_i~ y_{W_i}\right)}$
  \end{itemize}
\end{restatable}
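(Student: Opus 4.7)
The plan is to reduce everything to the observation that each coordinate of $\fusionop{1}(W_1, \ldots, W_n)$ is, by construction, a convex combination of the corresponding coordinates of the $W_i$ with weights $\lambda_i = K_i / \sum_{j=1}^n K_j$. Since $K_i = \belief{T_i} + \uncertainty{T_i}/2 \geq 0$ and we may assume $\sum_j K_j > 0$ (the degenerate case in which every $T_i = \tuple{0,1,0}$ is excluded, as the operator is otherwise undefined), the $\lambda_i$ form a valid convex-combination weight vector.

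For part (i) I would first verify that each of $\belief{\fusionop{1}(\cdot)}$, $\disbelief{\fusionop{1}(\cdot)}$, $\uncertainty{\fusionop{1}(\cdot)}$ lies in $[0,1]$: each is a convex combination of numbers in $[0,1]$ (since every $W_i \in \opinionset$ by hypothesis and, in our setting, by Proposition~\ref{prop:naive} or Theorem~\ref{thm:familydiscount} applied to $W_i = T_i \circ C_i$). Then I would check the simplex constraint by a single linear manipulation:
\begin{align*}
\belief{\fusionop{1}} + \disbelief{\fusionop{1}} + \uncertainty{\fusionop{1}}
 &= \sum_{i=1}^n \lambda_i \bigl(\belief{W_i} + \disbelief{W_i} + \uncertainty{W_i}\bigr) \\
 &= \sum_{i=1}^n \lambda_i = 1,
\end{align*}
using $W_i \in \opinionset$ in the penultimate equality.

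For parts (ii) and (iii), the key fact is that the map $\phi : \opinionset \to \mathbb{R}^2$ induced by the barycentric coordinates of the triangle in Appendix~\ref{sec:geom-subj-logic} is affine; that is, $\phi(\opinion{})$ depends linearly on $(\belief{}, \disbelief{}, \uncertainty{})$ subject to $\belief{} + \disbelief{} + \uncertainty{} = 1$. Consequently, an affine map commutes with convex combinations: for any weights $\lambda_i \geq 0$ with $\sum_i \lambda_i = 1$,
\[
\phi\!\left(\sum_{i=1}^n \lambda_i W_i\right) = \sum_{i=1}^n \lambda_i\, \phi(W_i).
\]
Applying this with $\lambda_i = K_i / \sum_j K_j$ and reading off the two coordinates immediately yields the formulas in (ii) and (iii).

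The only subtlety worth highlighting is the implicit assumption $\sum_j K_j > 0$; I would mention briefly that this holds whenever at least one $T_i \neq \tuple{0,1,0}$, which is the nondegenerate regime in which $\fusionop{1}$ is defined. Apart from that, the argument is a routine verification that convex combinations respect both the opinion simplex and its affine Cartesian embedding, so no real obstacle is expected.
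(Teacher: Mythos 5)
Your proposal is correct and follows essentially the same route as the paper: part (i) by linearity of the weighted average over the simplex constraint, and parts (ii)--(iii) by observing that the Cartesian coordinates of Proposition~\ref{propn:cartesian-transformation} depend linearly on $(\belief{},\disbelief{},\uncertainty{})$ and therefore commute with the convex combination defining $\fusionop{1}$ --- the paper merely carries out this linearity argument by explicit trigonometric manipulation rather than invoking affineness abstractly. Your explicit check that all three components lie in $[0,1]$ and sum to one, and your remark that $\sum_j K_j > 0$ must be assumed, are in fact slightly more complete than the paper's own proof of part (i), which only verifies $\uncertainty{} + \disbelief{} \leq 1$.
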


\section{Experimental Evaluation}
\label{sec:experiment}

In Section \ref{sec:desid-disc-fusi} we discussed the desiderata for the discounting and fusion operators. In Section \ref{sec:core-prop-requ}, we used these to obtain requirements \requirementdiscount{1}, \requirementdiscount{2}, \requirementdiscount{3}, \requirementfusion{1}, \requirementfusion{2}, which our proposed operator has been shown to fulfil.

In this section we focus on the design of an experiment aimed at evaluating, from an empirical point of view, the significance and usefulness of the desiderata and the requirements we identified for discounting and fusion operators. To this end, we consider an experiment where an explorer agent has to explore a network in order to determine the trustworthiness degree of other members: some of them are directly connected to the explorer, while for others it has to discount reputational information it obtains in the form of opinions. The evaluation is based upon the distance between the derived opinion and the ground truth, i.e.~the probability that each agent operates in a trustworthy manner. This likelihood is instantiated with the system, and is not known by anyone else in the network. This is just one of the possible scenario where the discounting and fusion operators can be applied: other empirical evaluations are left as avenues of future research.

The experimental setup follows the one described in \citep{Cerutti2013a}: changes are highlighted in the text.
In this experiment each agent can communicate with all the other agents in the network.
In order to randomly generate these networks, we consider a variable $\linkprob \in [0,1]$ representing the probability that an agent is connected\footnote{The term ``connected'' here can have different  names in different contexts, like ``friend'' in Facebook, or ``follower'' in Twitter.} to another agent (we exclude self-connections). Note that we do not constrain connections to be bidirectional\footnote{Although this may seem counter-intuitive, it partially captures real-world social media. For instance Twitter messages are public, therefore we don't know who will read our messages. The same applies with slightly modifications to Google+, and, of course, to blogging activities in general.}. In this experiment we set $\linkprob$ to lie between 5 and 25 with  increments  of 5. For each of the value of $\linkprob$, we execute the following procedure.

\subsection{Trust System Construction}

We build a set of 50 agents $\setagents = \set{\agent{1}, \ldots, \agent{50}}$: each agent $\agent{x}$ is characterised by a knowledge base $\setbeliefs{\agent{x}}$ and by the probability of responding truthfully to another agent's query, namely $\truthprob{\agent{x}} \in [0\ldots 1]$. For each agent $\agent{x}$, we randomly choose $\truthprob{\agent{x}} $.

We also require that $(\sharedbelief = \truesymbol) \in \setbeliefs{\agent{x}}$. In other words, all the agents share the same information $(\sharedbelief = \truesymbol)$ to be read ``\agent{x} knows that $\sharedbelief$ is $\truesymbol$''.

For each agent $\agent{x}$, we determine if it can communicate with $\agent{y} \neq \agent{x}$ according to $\linkprob$: if $\agent{y}$ is connected to $\agent{x}$, then we say that $\agent{y}$ is a connection of $\agent{x}$ ($\agent{y} \in \neighbours{\agent{x}}$).

Following the construction of the system, the experiment proceeds through two distinct phases, namely bootstrapping and exploration.

\subsection{Phase I: Bootstrapping} 

The bootstrapping phase is similar to that described in  \cite{Ismail2002}, where a $\beta$ distribution is used for analysing repetitive experiments and deriving a SL opinion. In this experiment, each agent $\agent{x}$ asks each of its connections $\agent{y} \in \neighbours{\agent{x}}$ about  $\sharedbelief$ a number of times equals to $\lboot$. Each time, $\agent{y}$ provides a possible false answer, based on the probability $\truthprob{\agent{y}}$ only (the communications are stateless): the two possible answers of $\agent{y}$ are, of course,   $\sharedbelief = \truesymbol$ and $\sharedbelief = \falsesymbol$.

Agent $\agent{x}$ counts the number of exchanges where $\agent{y}$ answered truthfully ($\#_{\truesymbol}$) and when it lied ($\#_{\falsesymbol}$). Clearly, $\lboot = \#_{\truesymbol} + \#_{\falsesymbol}$. 
Using this evidence, $\agent{x}$ can form an opinion on $\agent{y}$'s trustworthiness 
\[
\displaystyle{\opinionag{\agent{x}}{\agent{y}} = \tuple{\frac{\#_{\truesymbol}}{\lboot+2}, \frac{\#_{\falsesymbol}}{\lboot + 2}, \frac{2}{\lboot + 2}}}
\]

\noindent
which should be close (according to the definition of distance given in Def. \ref{def:distance})  to the ``ideal'' (``real'') opinion the (omniscient) experimenter has on $\agent{y}$, \viz{} 

\[
\displaystyle{\opinionag{Exp}{\agent{y}}} = \tuple{\truthprob{\agent{y}}, (1-\truthprob{\agent{y}}), 0}
\]

Therefore, during the bootstrapping phase, each agent $\agent{y} \in \neighbours{\agent{x}}$, $\agent{x}$ records its opinion of $\agent{y}$ in its knowledge base.

In the experiment described in \cite{Cerutti2013a}, $\lboot$ varies between 25 and 250: however, as we noted in that paper, this variation did not alter the experiment's results. Therefore, in this paper we collected data for ten different values of $\lboot$ varying it between 2 and 29 with a step size of 3.

\subsection{Phase II: Exploration} 
\label{sec:exp-exploring-network}

After each agent has enriched its knowledge base with opinions of its connections' trustworthiness, an ``explorer'' $\theagent \in \setagents$ is randomly selected. The task of this explorer is to determine the trustworthiness of each agent in the network. The explorer, $\theagent$, acquires information about the network by asking its connections ``Who are your connections?''. Each agent $\agent{y} \in \neighbours{\theagent}$ answers this question according to $\truthprob{\agent{y}}$, which means that their answers are: for each $\agent{y} \in \neighbours{\agent{x}}$ $Connections_{\agent{y}} \subseteq  \neighbours{\agent{y}}$ (clearly if $\truthprob{\agent{y}} = 1$, then \linebreak  $Connections_{\agent{y}} = \neighbours{\agent{y}}$).

Agent $\theagent$ collects all the answers and creates a set of tuples associating agents that the explorer does not directly know, and all the agents that have revealed that they have connections to that agent, viz.:

\[
\begin{array}{l l}
  \displaystyle{\mathcal{M}} = & \displaystyle{\{\tuple{\agent{z}, \set{\agent{y_1}, \ldots, \agent{y_n}}} ~|~ \forall i \in [1\ldots n]~ \agent{y_i} \in \neighbours{\theagent}} \\
    & \displaystyle{~~\mbox{ and }  \allowbreak \agent{z} \in \bigcap_{i = 1}^n \neighbours{\agent{y_i}}\}}
\end{array}
\]

Then, for each pair of $\mathcal{M}$, $\tuple{\agent{z}, \set{\agent{y_1}, \ldots, \agent{y_n}}}$, 
such that $\agent{z} \notin \neighbours{\theagent} \cup \set{\theagent}$ (i.e. for each agent it is not directly connected to), $\theagent$ asks each $\agent{y_i}$ (i.e. those that are connected to that agent) about $\opinionag{\agent{y_i}}{\agent{z}}$ (i.e. their opinion of that agent). $\agent{y_i}$ answers according to $\truthprob{\agent{y_i}}$ either $\opinionag{\agent{y_i}}{\agent{z}}$ or $\opinion{R}$ where $\opinion{R}$ is a SL opinion computed randomly such that $\opinion{R} \neq \opinionag{\agent{y_i}}{\agent{z}}$. Since $\theagent$ cannot determine whether the answer is true or not,  we abuse notation,  by associating  $\opinionag{\agent{y_i}}{\agent{z}}$ with the answer $\theagent$ received from $\agent{y_i}$ to the question ``What is your opinion about $\agent{z}$?''.

Subsequently, $\theagent$ computes ${\opinionag{\theagent}{\agent{z}}}_{|J} = (\opinionag{\theagent}{\agent{y_1}} \josangdiscountsym \opinionag{\agent{y_1}}{\agent{z}}) \josangfusionsym \ldots \josangfusionsym (\opinionag{\theagent}{\agent{y_n}} \josangdiscountsym \opinionag{\agent{y_n}}{\agent{z}})$ (\viz{} the fusion of the discounted opinions on $\agent{z}$ of its connections using J{\o}sang's operators), and, for each operator defined in Def. \ref{def:naive} and Def. \ref{def:graphicaldiscountoperators} (\ie{} $\circ \in \set{\optrustconf{}{}{n}, \optrustconf{}{}{1}, \optrustconf{}{}{2}, \optrustconf{}{}{3}}$ )
${\opinionag{\theagent}{\agent{z}}}_{|\circ} = \fusionop{1}((\opinionag{\theagent}{\agent{y_1}} \circ \opinionag{\agent{y_1}}{\agent{z}}), \ldots, (\opinionag{\theagent}{\agent{y_n}} \circ \opinionag{\agent{y_n}}{\agent{z}})$ (\viz{} the fusion of the discounted opinions on $\agent{z}$ of its connections using the na\"ive operator, and the members of the graphical operator family introduced in Def. \ref{defn:combination} with the fusion operator of Def. \ref{def:graphicalfusion}). Since we considered only one fusion operator, viz. \fusionop{1}, we will write it without a subscript to improve readability. Moreover, since we want to evaluate the proposed operators and compare them to J{\o}sang's ones, each exploration has been performed with J{\o}sang's operators and with only one of the proposed operators. We therefore explore the same network four times with J{\o}sang's operator. However, doing so guarantees that the evaluation of each operator is independent from other evaluations.

Finally, each agent $\agent{z}$ is added to the list of the connections of $\theagent$ and the process starts again by setting $\mathcal{M} = \emptyset$ and querying each member of the connections until, in two subsequent interactions, no further agents are added to $\theagent$'s connections. This exploration process therefore enables \theagent{} to form a picture of the agents in the network that does not have a direct link to through the opinions of other agents. Note that the results obtained through iteration $j$ of the exploration phase serves to bootstrap iteration $j + 1$.

\subsection{Computing the Distances.}

\begin{figure*}[htb]
  \centering
  \begin{subfigure}[b]{0.30\textwidth}
    \resizebox{\textwidth}{!}{
%
%
\begin{psfrags}%
\psfragscanon%
%
\psfrag{s10}[][]{\color[rgb]{0,0,0}\setlength{\tabcolsep}{0pt}\begin{tabular}{c} \end{tabular}}%
\psfrag{s11}[][]{\color[rgb]{0,0,0}\setlength{\tabcolsep}{0pt}\begin{tabular}{c} \end{tabular}}%
\psfrag{s12}[l][l]{\color[rgb]{0,0,0}Gamma function}%
\psfrag{s13}[l][l]{\color[rgb]{0,0,0}histogram}%
\psfrag{s14}[l][l]{\color[rgb]{0,0,0}Gamma function}%
%
\psfrag{x01}[t][t]{0}%
\psfrag{x02}[t][t]{0.2}%
\psfrag{x03}[t][t]{0.4}%
\psfrag{x04}[t][t]{0.6}%
\psfrag{x05}[t][t]{0.8}%
\psfrag{x06}[t][t]{1}%
\psfrag{x07}[t][t]{1.2}%
\psfrag{x08}[t][t]{1.4}%
%
\psfrag{v01}[r][r]{0}%
\psfrag{v02}[r][r]{5000}%
\psfrag{v03}[r][r]{10000}%
\psfrag{v04}[r][r]{15000}%
\psfrag{v05}[r][r]{20000}%
\psfrag{v06}[r][r]{25000}%
\psfrag{ypower2}[Bl][Bl]{$\times 10^{4}$}%
%
\resizebox{12cm}{!}{\includegraphics{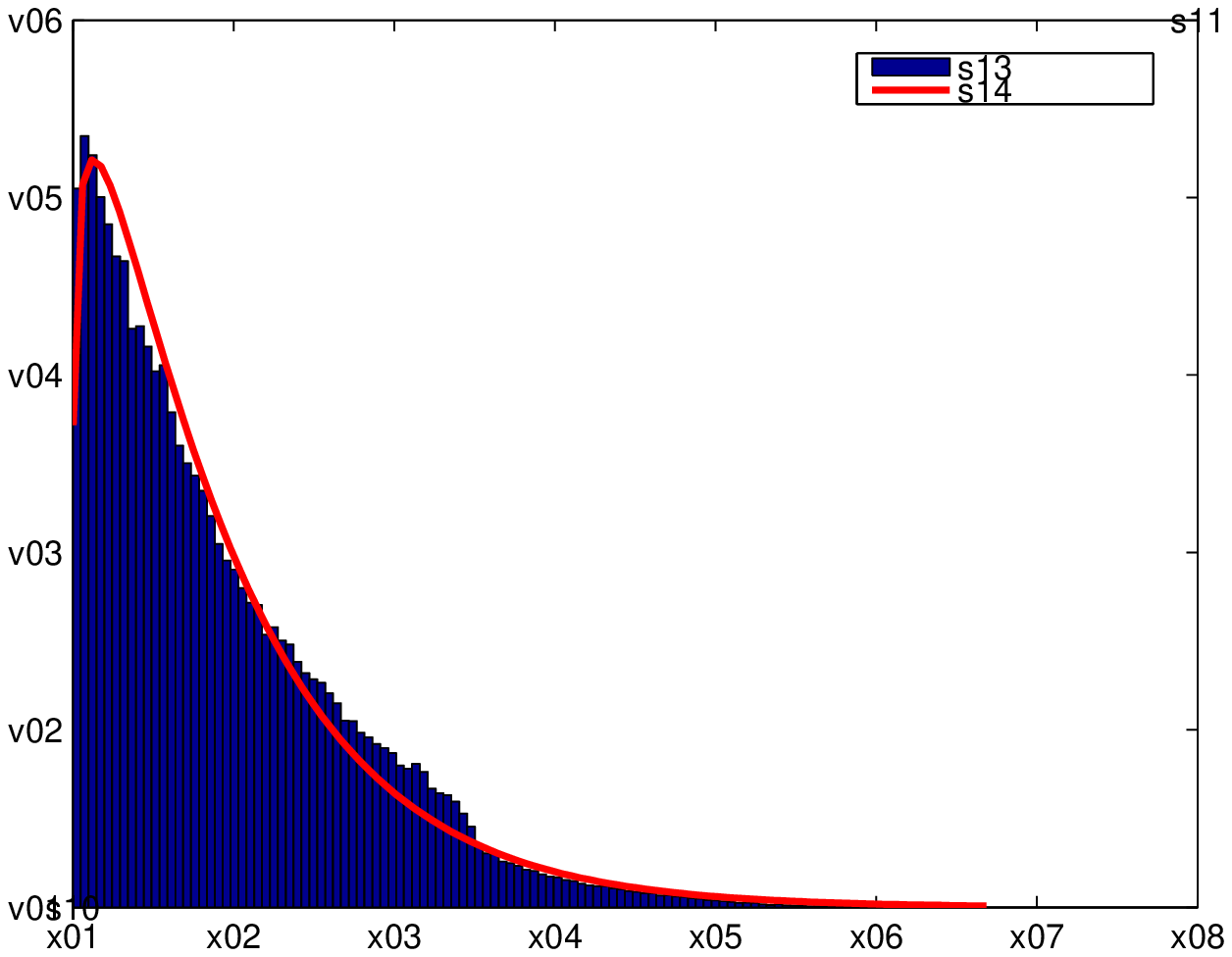}}%
\end{psfrags}%
%
}
    \caption{}
    \label{fig:distribution-expected-josang}
  \end{subfigure}
  \qquad
  \begin{subfigure}[b]{0.30\textwidth}
    \resizebox{\textwidth}{!}{
%
%
\begin{psfrags}%
\psfragscanon%
%
\psfrag{s10}[][]{\color[rgb]{0,0,0}\setlength{\tabcolsep}{0pt}\begin{tabular}{c} \end{tabular}}%
\psfrag{s11}[][]{\color[rgb]{0,0,0}\setlength{\tabcolsep}{0pt}\begin{tabular}{c} \end{tabular}}%
\psfrag{s12}[l][l]{\color[rgb]{0,0,0}Gamma function}%
\psfrag{s13}[l][l]{\color[rgb]{0,0,0}histogram}%
\psfrag{s14}[l][l]{\color[rgb]{0,0,0}Gamma function}%
%
\psfrag{x01}[t][t]{0}%
\psfrag{x02}[t][t]{0.5}%
\psfrag{x03}[t][t]{1}%
\psfrag{x04}[t][t]{1.5}%
\psfrag{x05}[t][t]{2}%
\psfrag{x06}[t][t]{2.5}%
\psfrag{x07}[t][t]{3}%
%
\psfrag{v01}[r][r]{0}%
\psfrag{v02}[r][r]{2000}%
\psfrag{v03}[r][r]{4000}%
\psfrag{v04}[r][r]{6000}%
\psfrag{v05}[r][r]{8000}%
\psfrag{v06}[r][r]{10000}%
\psfrag{v07}[r][r]{12000}%
%
\resizebox{12cm}{!}{\includegraphics{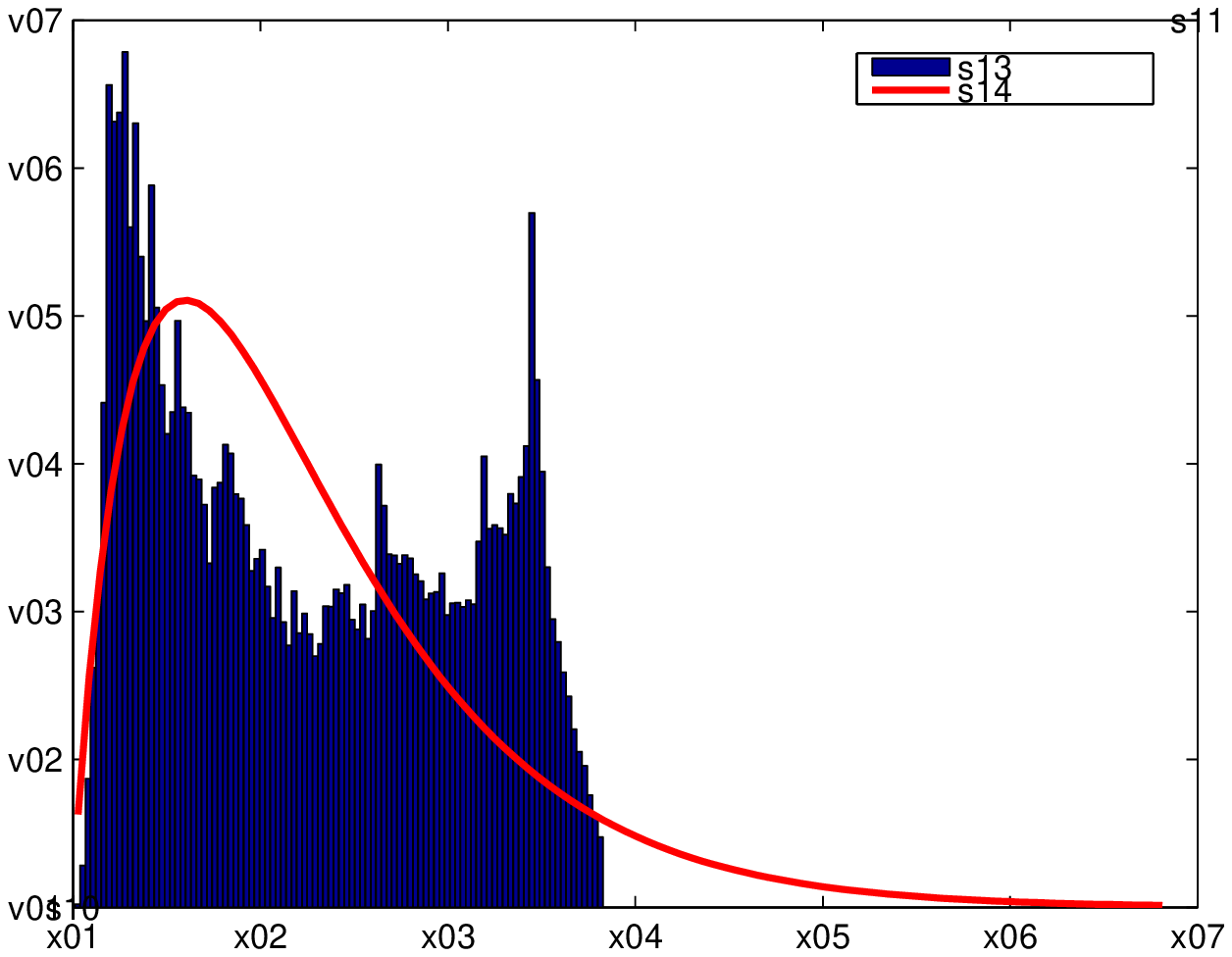}}%
\end{psfrags}%
%
}
    \caption{}
    \label{fig:distribution-geometric-josang}
  \end{subfigure}
  \caption{Histogram and Gamma function fitting of $\distanceexp{{\opinionag{\theagent}{\agent{z}}}_{|J}}{\opinionag{Exp}{\agent{z}}}$, Fig. (\subref{fig:distribution-expected-josang}), and of $\distance{{\opinionag{\theagent}{\agent{z}}}_{|J}}{\opinionag{Exp}{\agent{z}}}$, Fig. (\subref{fig:distribution-geometric-josang}).}
  \label{fig:distribution-josang}
\end{figure*}

For each agent $\agent{z} \in \setagents \setminus \set{\theagent}$, for each $\circ \in \set{\optrustconf{}{}{n}, \optrustconf{}{}{1}, \optrustconf{}{}{2}, \optrustconf{}{}{3}}$, we compute the distance between the two derived opinions ${\opinionag{\theagent}{\agent{z}}}_{|J}$ and ${\opinionag{\theagent}{\agent{z}}}_{|\circ}$, and the ``ideal'' opinion $\opinionag{Exp}{\agent{z}}$. For this purpose, we consider two notions of distance, namely a \emph{geometrical distance}, and a \emph{distance of expected values} \citep{Kaplan13personal}.

The \emph{geometric distance} between two opinions $\opinion{O_1}$ and $\opinion{O_2}$ is the Euclidean distance between the two point in the $\opinionset$ space.

\begin{definition}
  \label{def:distance}
  Given two opinions $O_1 = \opinion{O_1}$ and $O_2 = \opinion{O_2}$, the \emph{geometric distance} between $O_1$ and $O_2$ is 
  {\small
  \[
  \distance{O_1}{O_2} = \sqrt{(\belief{O_2} - \belief{O_1})^2 + (\disbelief{O_2} - \disbelief{O_1})^2 + (\uncertainty{O_2} - \uncertainty{O_1})^2 }
  \]
  }
\end{definition}

Another interesting distance measure is difference between the \emph{expected values} of subjective logic opinions. Let us recall that the expected value for a subjective logic opinion $\tuple{b_X, d_X, u_x, a_x}$ is $b_x + u_x \cdot a_x$. Given that  we assume a fixed base rate of $\frac{1}{2}$,  the expected value simplifies to $b_x + \frac{u_x}{2}$. Given the expected value of two opinions, we can easily compute the distance between them.

\begin{definition}
  \label{def:distanceexp}
  Given two opinions $O_1 = \opinion{O_1}$ and $O_2 = \opinion{O_2}$, the \emph{expected value distance} between $O_1$ and $O_2$ is 
  \[
  \displaystyle{\distanceexp{O_1}{O_2} = \left| \left(\belief{O_1} + \frac{\uncertainty{O_1}}{2} \right) - \left(\belief{O_2} + \frac{\uncertainty{O_2}}{2}\right)\right|}
  \]
\end{definition}

In other words, 
$\forall \agent{z} \in \setagents \setminus \set{\theagent}$, $\distance{{\opinionag{\theagent}{\agent{z}}}_{|J}}{\opinionag{Exp}{\agent{z}}}$ (reps. $\distanceexp{{\opinionag{\theagent}{\agent{z}}}_{|J}}{\opinionag{Exp}{\agent{z}}}$) is the geometric distance (resp. expected value distance)  between the derived opinion using J{\o}sang's operators and the ``ideal'' one (abbrev. $d^J_G$, resp. $d^J_E$), and, $\forall \circ \in \set{\optrustconf{}{}{n}, \optrustconf{}{}{1}, \optrustconf{}{}{2}, \optrustconf{}{}{3}}$, $\distance{{\opinionag{\theagent}{\agent{z}}}_{|\circ}}{\opinionag{Exp}{\agent{z}}}$ (resp. $\distanceexp{{\opinionag{\theagent}{\agent{z}}}_{|\circ}}{\opinionag{Exp}{\agent{z}}}$) is the geometric distance (resp. expected value distance) between the derived opinion using either the na\"ive operators or the operators of the family of graphical discount operators (Def. \ref{defn:combination}) and the ``ideal'' one (abbrev. $d^{\circ}_G$, resp. $d^{\circ}_E$). 

Finally, for each $\agent{z} \in \setagents \setminus \set{\theagent}$, $\forall \circ \in \set{\optrustconf{}{}{n}, \optrustconf{}{}{1}, \optrustconf{}{}{2}, \optrustconf{}{}{3}}$, we compare the two computed distances obtaining the following scalar comparison values:

\[
r_{G}(\agent{z}) = 
\begin{cases}
 
\displaystyle{-\log{\frac{\distance{{\opinionag{\theagent}{\agent{z}}}_{|\circ}}{\opinionag{Exp}{\agent{z}}}}{\distance{{\opinionag{\theagent}{\agent{z}}}_{|J}}{\opinionag{Exp}{\agent{z}}}}}} & \begin{array}{l} \mbox{\hspace{-0.7em}if } 
\distance{{\opinionag{\theagent}{\agent{z}}}_{|\circ}}{\opinionag{Exp}{\agent{z}}} > \\
 ~~~~ \distance{{\opinionag{\theagent}{\agent{z}}}_{|J}}{\opinionag{Exp}{\agent{z}}} \\  
\end{array}\\

\mbox{} & \mbox{}\\
 \displaystyle{\log{\frac{\distance{{\opinionag{\theagent}{\agent{z}}}_{|J}}{\opinionag{Exp}{\agent{z}}}}{\distance{{\opinionag{\theagent}{\agent{z}}}_{|\circ}}{\opinionag{Exp}{\agent{z}}}}}} & \begin{array}{l} \mbox{\hspace{-0.7em}if }  \distance{{\opinionag{\theagent}{\agent{z}}}_{|J}}{\opinionag{Exp}{\agent{z}}} \geq \\
 ~~~~ \distance{{\opinionag{\theagent}{\agent{z}}}_{|\circ}}{\opinionag{Exp}{\agent{z}}}
\end{array}\\
\end{cases}
\]

\noindent
and

\[
r_{E}(\agent{z}) = 
\begin{cases}
 
\displaystyle{-\log{\frac{\distanceexp{{\opinionag{\theagent}{\agent{z}}}_{|\circ}}{\opinionag{Exp}{\agent{z}}}}{\distanceexp{{\opinionag{\theagent}{\agent{z}}}_{|J}}{\opinionag{Exp}{\agent{z}}}}}} & \begin{array}{l} \mbox{\hspace{-0.7em}if } 
\distanceexp{{\opinionag{\theagent}{\agent{z}}}_{|\circ}}{\opinionag{Exp}{\agent{z}}} > \\
 ~~~~ \distanceexp{{\opinionag{\theagent}{\agent{z}}}_{|J}}{\opinionag{Exp}{\agent{z}}} \\  
\end{array}\\

\mbox{} & \mbox{}\\
 \displaystyle{\log{\frac{\distanceexp{{\opinionag{\theagent}{\agent{z}}}_{|J}}{\opinionag{Exp}{\agent{z}}}}{\distanceexp{{\opinionag{\theagent}{\agent{z}}}_{|\circ}}{\opinionag{Exp}{\agent{z}}}}}} & \begin{array}{l} \mbox{\hspace{-0.7em}if }  \distanceexp{{\opinionag{\theagent}{\agent{z}}}_{|J}}{\opinionag{Exp}{\agent{z}}} \geq \\
 ~~~~ \distanceexp{{\opinionag{\theagent}{\agent{z}}}_{|\circ}}{\opinionag{Exp}{\agent{z}}}
\end{array}\\
\end{cases}
\]

To strengthen the significance of the results, we mainly concentrate on averages, and thus  $\theagent$ explores the network $|\setagents|/2  = 25$ times; we write $r_G(\agent{z})$ (resp. $r_E(\agent{z})$) to denote the average of the 25 computed logarithmic ratios using the geometric distance (resp. expected value distance). Moreover $\overline{r_G(\agent{z})} = average_{\agent{z} \in \setagents \setminus \set{\theagent}} r(\agent{z})$ (resp. $\overline{r_E(\agent{z})} = \linebreak average_{\agent{z} \in \setagents \setminus \set{\theagent}} r(\agent{z})$) is the average of the comparison value over the whole set of agents using the geometric distance (resp. expected value distance).

\section{Analysis of Experimental Results}
\label{sec:results}

To ensure that the outcomes are  not biased by the random generator, we run the same experiment ten times. Each run follows the steps described in Sect. \ref{sec:experiment}, and thus for each value of $\linkprob$, 10 networks have been generated randomly. Moreover, since each agent can lie, each generated network has been explored 25 times. Therefore, for each run, for each value of $\linkprob$, 250 explorations over 10 different networks have been carried on (i.e.~12500 explorations were considered in this experiment).

In Section \ref{sec:distr-dist} we make a qualitative analyse of the distributions of the distances computed using the two metrics discussed in Definitions \ref{def:distance} and \ref{def:distanceexp}.  Section \ref{sec:analys-using-wilc} discusses the results of the Wilcoxon signed-rank test on the measure of distances. Finally, Sect. \ref{sec:dynamics-results} provides a qualitative analysis of the dynamics of the results varying the two parameters of the experiment, namely the probability of connections $\linkprob$ and the bootstrap time $\lboot$.

\begin{figure*}[p]
  \centering
  \begin{subfigure}[b]{0.30\textwidth}
    \resizebox{\textwidth}{!}{
%
%
\begin{psfrags}%
\psfragscanon%
%
\psfrag{s10}[][]{\color[rgb]{0,0,0}\setlength{\tabcolsep}{0pt}\begin{tabular}{c} \end{tabular}}%
\psfrag{s11}[][]{\color[rgb]{0,0,0}\setlength{\tabcolsep}{0pt}\begin{tabular}{c} \end{tabular}}%
\psfrag{s12}[l][l]{\color[rgb]{0,0,0}Gamma function}%
\psfrag{s13}[l][l]{\color[rgb]{0,0,0}histogram}%
\psfrag{s14}[l][l]{\color[rgb]{0,0,0}Gamma function}%
%
\psfrag{x01}[t][t]{0}%
\psfrag{x02}[t][t]{0.2}%
\psfrag{x03}[t][t]{0.4}%
\psfrag{x04}[t][t]{0.6}%
\psfrag{x05}[t][t]{0.8}%
\psfrag{x06}[t][t]{1}%
\psfrag{x07}[t][t]{1.2}%
\psfrag{x08}[t][t]{1.4}%
%
\psfrag{v01}[r][r]{0}%
\psfrag{v02}[r][r]{5000}%
\psfrag{v03}[r][r]{10000}%
\psfrag{v04}[r][r]{15000}%
\psfrag{v05}[r][r]{20000}%
\psfrag{v06}[r][r]{25000}%
\psfrag{ypower2}[Bl][Bl]{$\times 10^{4}$}%
%
\resizebox{12cm}{!}{\includegraphics{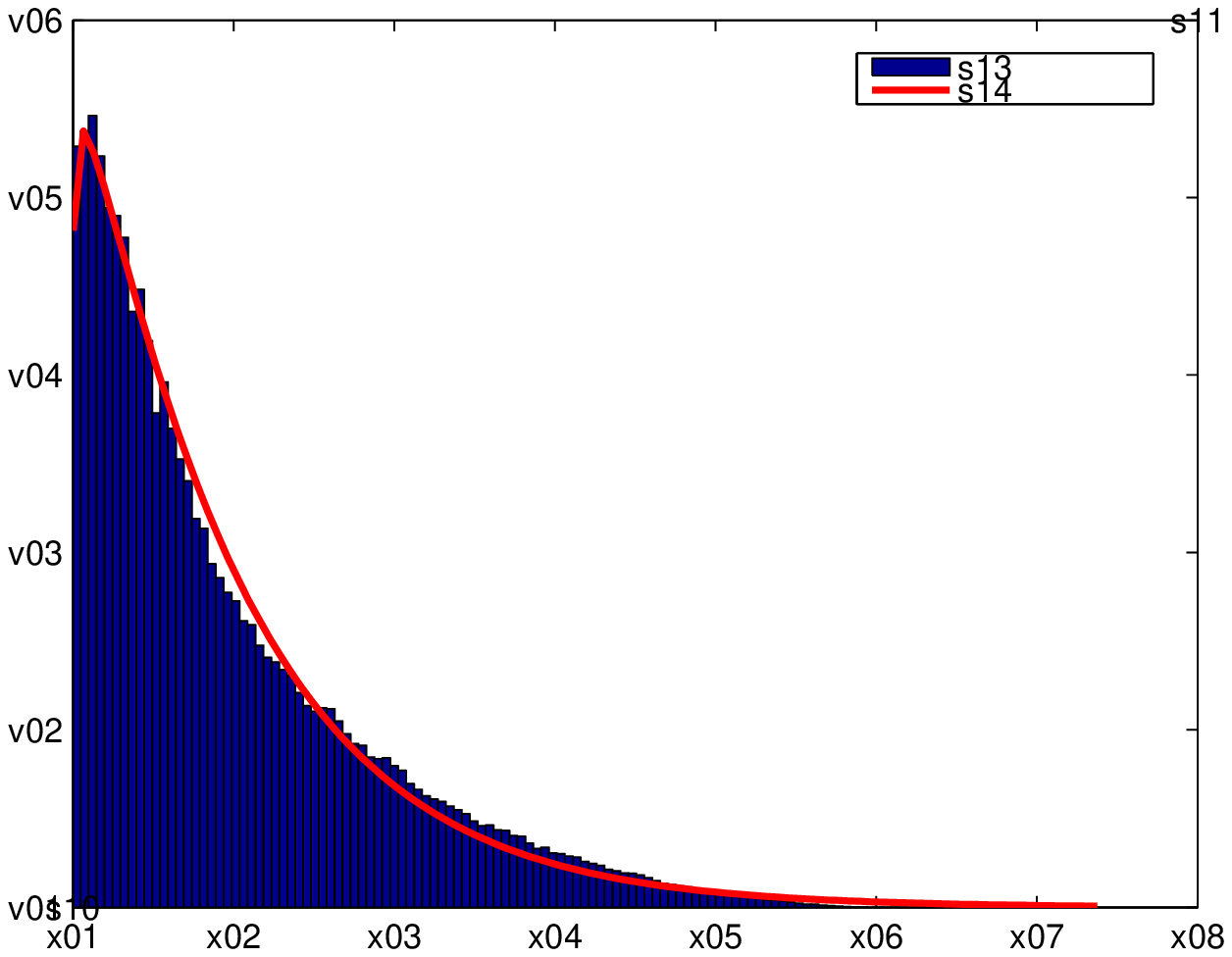}}%
\end{psfrags}%
%
}
    \caption{}
    \label{fig:distribution-expected-traditional}
  \end{subfigure}
  \qquad
  \begin{subfigure}[b]{0.30\textwidth}
    \resizebox{\textwidth}{!}{
%
%
\begin{psfrags}%
\psfragscanon%
%
\psfrag{s10}[][]{\color[rgb]{0,0,0}\setlength{\tabcolsep}{0pt}\begin{tabular}{c} \end{tabular}}%
\psfrag{s11}[][]{\color[rgb]{0,0,0}\setlength{\tabcolsep}{0pt}\begin{tabular}{c} \end{tabular}}%
\psfrag{s12}[l][l]{\color[rgb]{0,0,0}Gamma function}%
\psfrag{s13}[l][l]{\color[rgb]{0,0,0}histogram}%
\psfrag{s14}[l][l]{\color[rgb]{0,0,0}Gamma function}%
%
\psfrag{x01}[t][t]{0}%
\psfrag{x02}[t][t]{0.5}%
\psfrag{x03}[t][t]{1}%
\psfrag{x04}[t][t]{1.5}%
\psfrag{x05}[t][t]{2}%
\psfrag{x06}[t][t]{2.5}%
%
\psfrag{v01}[r][r]{0}%
\psfrag{v02}[r][r]{5000}%
\psfrag{v03}[r][r]{10000}%
\psfrag{v04}[r][r]{15000}%
%
\resizebox{12cm}{!}{\includegraphics{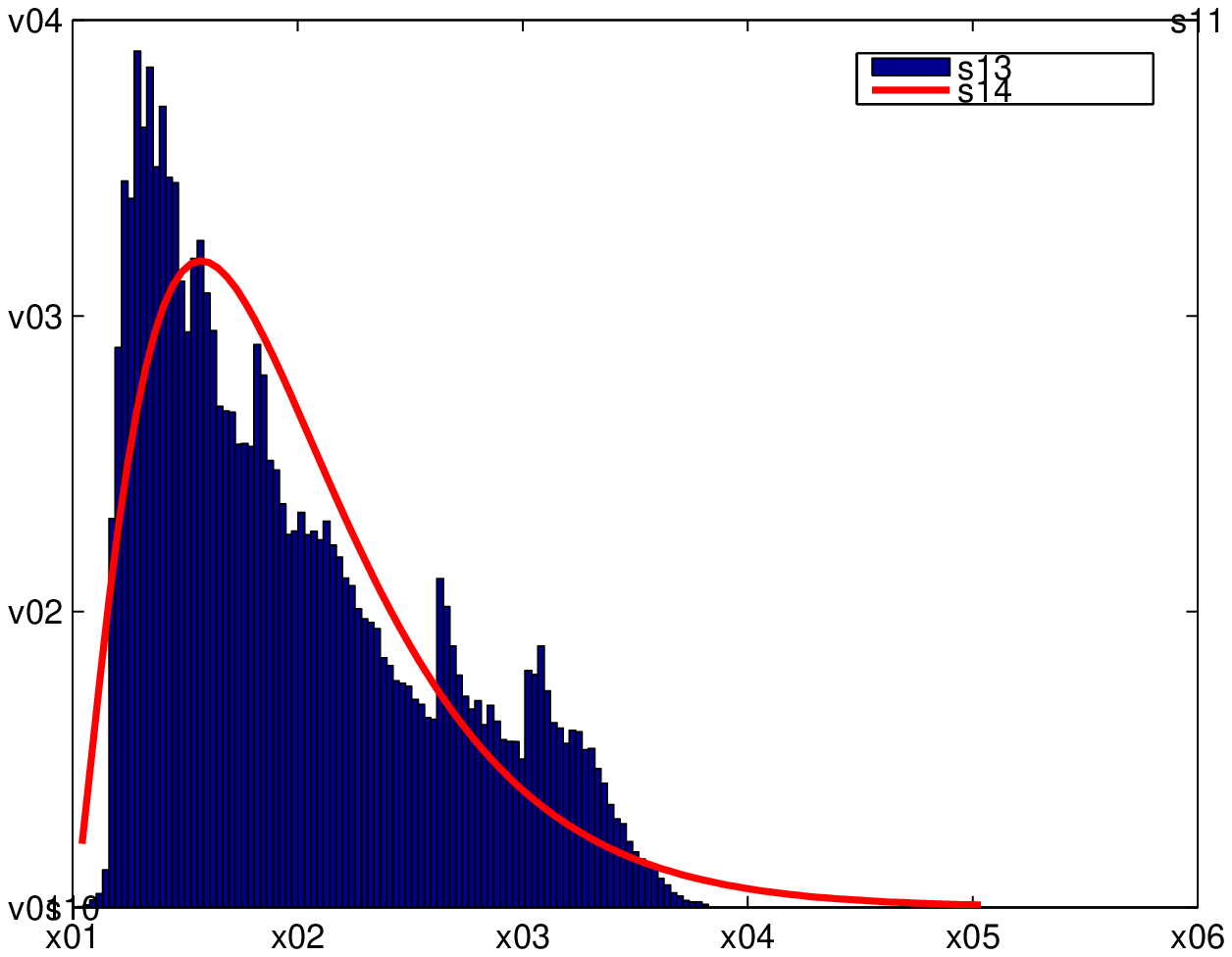}}%
\end{psfrags}%
%
}
    \caption{}
    \label{fig:distribution-geometric-traditional}
  \end{subfigure}

  \begin{subfigure}[b]{0.30\textwidth}
    \resizebox{\textwidth}{!}{
%
%
\begin{psfrags}%
\psfragscanon%
%
\psfrag{s10}[][]{\color[rgb]{0,0,0}\setlength{\tabcolsep}{0pt}\begin{tabular}{c} \end{tabular}}%
\psfrag{s11}[][]{\color[rgb]{0,0,0}\setlength{\tabcolsep}{0pt}\begin{tabular}{c} \end{tabular}}%
\psfrag{s12}[l][l]{\color[rgb]{0,0,0}Gamma function}%
\psfrag{s13}[l][l]{\color[rgb]{0,0,0}histogram}%
\psfrag{s14}[l][l]{\color[rgb]{0,0,0}Gamma function}%
%
\psfrag{x01}[t][t]{0}%
\psfrag{x02}[t][t]{0.2}%
\psfrag{x03}[t][t]{0.4}%
\psfrag{x04}[t][t]{0.6}%
\psfrag{x05}[t][t]{0.8}%
\psfrag{x06}[t][t]{1}%
\psfrag{x07}[t][t]{1.2}%
\psfrag{x08}[t][t]{1.4}%
%
\psfrag{v01}[r][r]{0}%
\psfrag{v02}[r][r]{5000}%
\psfrag{v03}[r][r]{10000}%
\psfrag{v04}[r][r]{15000}%
\psfrag{v05}[r][r]{20000}%
\psfrag{v06}[r][r]{25000}%
\psfrag{ypower2}[Bl][Bl]{$\times 10^{4}$}%
%
\resizebox{12cm}{!}{\includegraphics{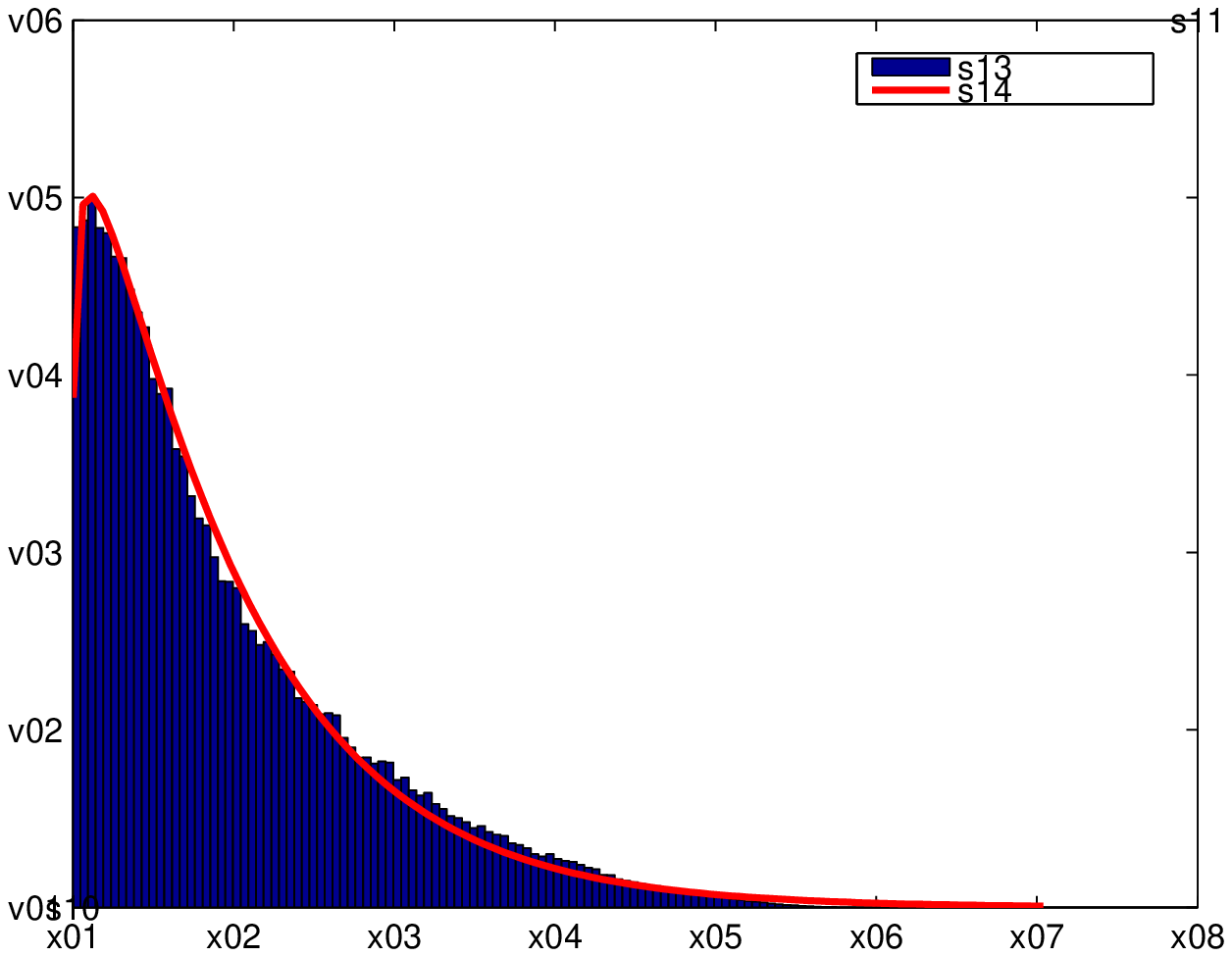}}%
\end{psfrags}%
%
}
    \caption{}
    \label{fig:distribution-expected-parallel}
  \end{subfigure}
  \qquad
  \begin{subfigure}[b]{0.30\textwidth}
    \resizebox{\textwidth}{!}{
%
%
\begin{psfrags}%
\psfragscanon%
%
\psfrag{s10}[][]{\color[rgb]{0,0,0}\setlength{\tabcolsep}{0pt}\begin{tabular}{c} \end{tabular}}%
\psfrag{s11}[][]{\color[rgb]{0,0,0}\setlength{\tabcolsep}{0pt}\begin{tabular}{c} \end{tabular}}%
\psfrag{s12}[l][l]{\color[rgb]{0,0,0}Gamma function}%
\psfrag{s13}[l][l]{\color[rgb]{0,0,0}histogram}%
\psfrag{s14}[l][l]{\color[rgb]{0,0,0}Gamma function}%
%
\psfrag{x01}[t][t]{0}%
\psfrag{x02}[t][t]{0.5}%
\psfrag{x03}[t][t]{1}%
\psfrag{x04}[t][t]{1.5}%
\psfrag{x05}[t][t]{2}%
\psfrag{x06}[t][t]{2.5}%
%
\psfrag{v01}[r][r]{0}%
\psfrag{v02}[r][r]{2000}%
\psfrag{v03}[r][r]{4000}%
\psfrag{v04}[r][r]{6000}%
\psfrag{v05}[r][r]{8000}%
\psfrag{v06}[r][r]{10000}%
\psfrag{v07}[r][r]{12000}%
\psfrag{v08}[r][r]{14000}%
%
\resizebox{12cm}{!}{\includegraphics{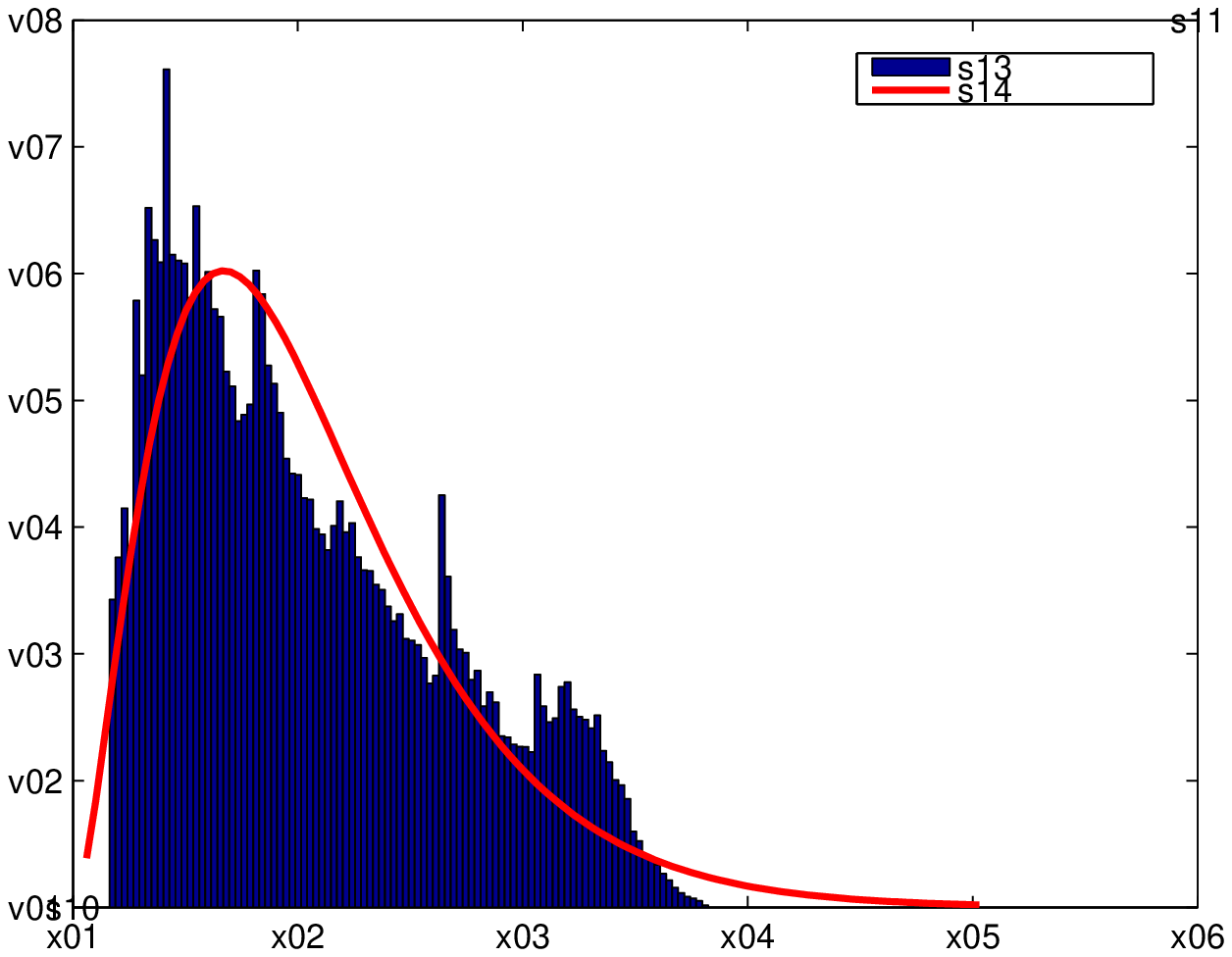}}%
\end{psfrags}%
%
}
    \caption{}
    \label{fig:distribution-geometric-parallel}
  \end{subfigure}

  \begin{subfigure}[b]{0.30\textwidth}
    \resizebox{\textwidth}{!}{
%
%
\begin{psfrags}%
\psfragscanon%
%
\psfrag{s10}[][]{\color[rgb]{0,0,0}\setlength{\tabcolsep}{0pt}\begin{tabular}{c} \end{tabular}}%
\psfrag{s11}[][]{\color[rgb]{0,0,0}\setlength{\tabcolsep}{0pt}\begin{tabular}{c} \end{tabular}}%
\psfrag{s12}[l][l]{\color[rgb]{0,0,0}Gamma function}%
\psfrag{s13}[l][l]{\color[rgb]{0,0,0}histogram}%
\psfrag{s14}[l][l]{\color[rgb]{0,0,0}Gamma function}%
%
\psfrag{x01}[t][t]{0}%
\psfrag{x02}[t][t]{0.2}%
\psfrag{x03}[t][t]{0.4}%
\psfrag{x04}[t][t]{0.6}%
\psfrag{x05}[t][t]{0.8}%
\psfrag{x06}[t][t]{1}%
\psfrag{x07}[t][t]{1.2}%
\psfrag{x08}[t][t]{1.4}%
%
\psfrag{v01}[r][r]{0}%
\psfrag{v02}[r][r]{2000}%
\psfrag{v03}[r][r]{4000}%
\psfrag{v04}[r][r]{6000}%
\psfrag{v05}[r][r]{8000}%
\psfrag{v06}[r][r]{10000}%
\psfrag{v07}[r][r]{12000}%
\psfrag{v08}[r][r]{14000}%
\psfrag{v09}[r][r]{16000}%
\psfrag{v10}[r][r]{18000}%
\psfrag{v11}[r][r]{20000}%
\psfrag{ypower2}[Bl][Bl]{$\times 10^{4}$}%
%
\resizebox{12cm}{!}{\includegraphics{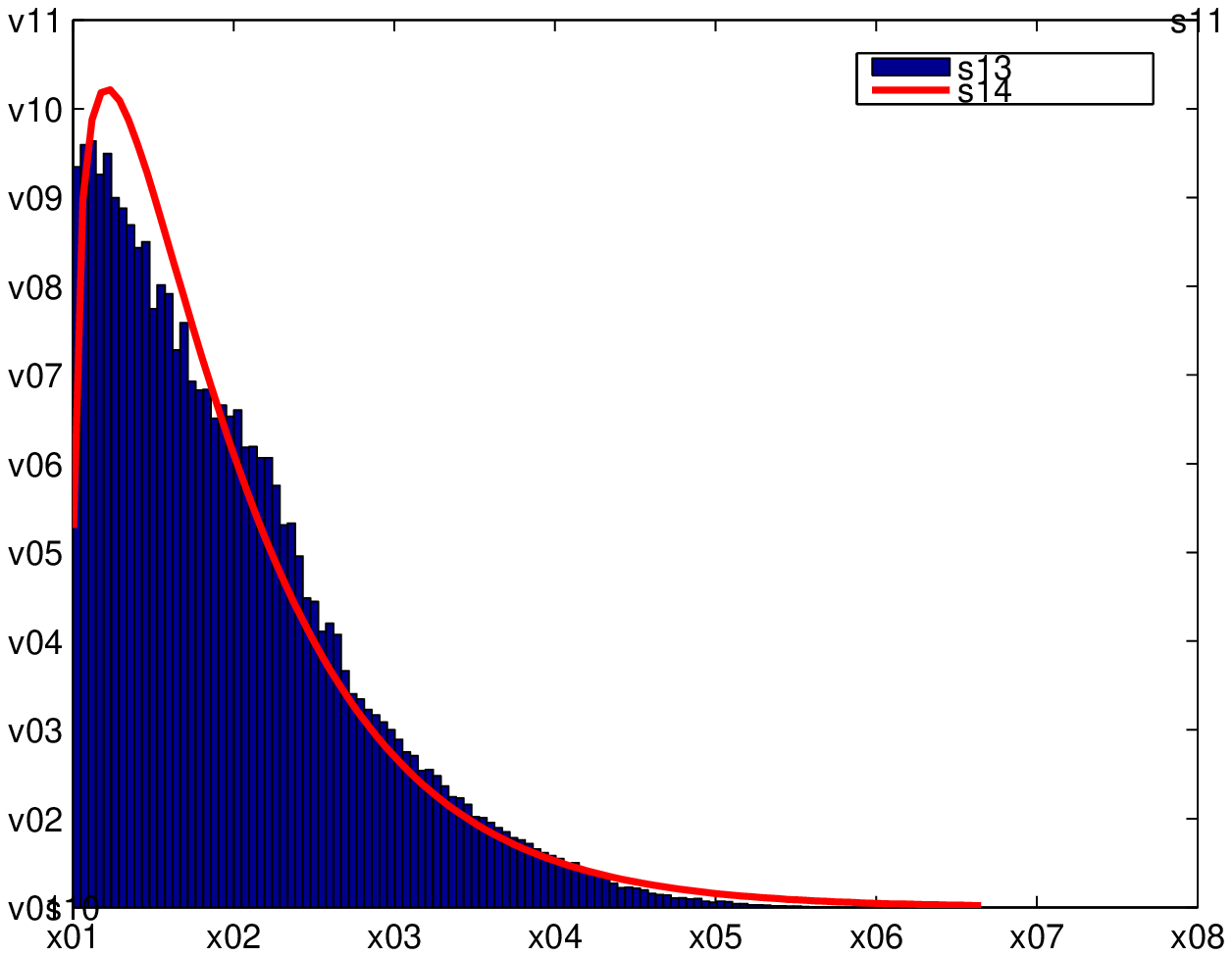}}%
\end{psfrags}%
%
}
    \caption{}
    \label{fig:distribution-expected-half}
  \end{subfigure}
  \qquad
  \begin{subfigure}[b]{0.30\textwidth}
    \resizebox{\textwidth}{!}{
%
%
\begin{psfrags}%
\psfragscanon%
%
\psfrag{s10}[][]{\color[rgb]{0,0,0}\setlength{\tabcolsep}{0pt}\begin{tabular}{c} \end{tabular}}%
\psfrag{s11}[][]{\color[rgb]{0,0,0}\setlength{\tabcolsep}{0pt}\begin{tabular}{c} \end{tabular}}%
\psfrag{s12}[l][l]{\color[rgb]{0,0,0}Gamma function}%
\psfrag{s13}[l][l]{\color[rgb]{0,0,0}histogram}%
\psfrag{s14}[l][l]{\color[rgb]{0,0,0}Gamma function}%
%
\psfrag{x01}[t][t]{0}%
\psfrag{x02}[t][t]{0.5}%
\psfrag{x03}[t][t]{1}%
\psfrag{x04}[t][t]{1.5}%
\psfrag{x05}[t][t]{2}%
\psfrag{x06}[t][t]{2.5}%
%
\psfrag{v01}[r][r]{0}%
\psfrag{v02}[r][r]{1000}%
\psfrag{v03}[r][r]{2000}%
\psfrag{v04}[r][r]{3000}%
\psfrag{v05}[r][r]{4000}%
\psfrag{v06}[r][r]{5000}%
\psfrag{v07}[r][r]{6000}%
\psfrag{v08}[r][r]{7000}%
\psfrag{v09}[r][r]{8000}%
\psfrag{v10}[r][r]{9000}%
\psfrag{v11}[r][r]{10000}%
%
\resizebox{12cm}{!}{\includegraphics{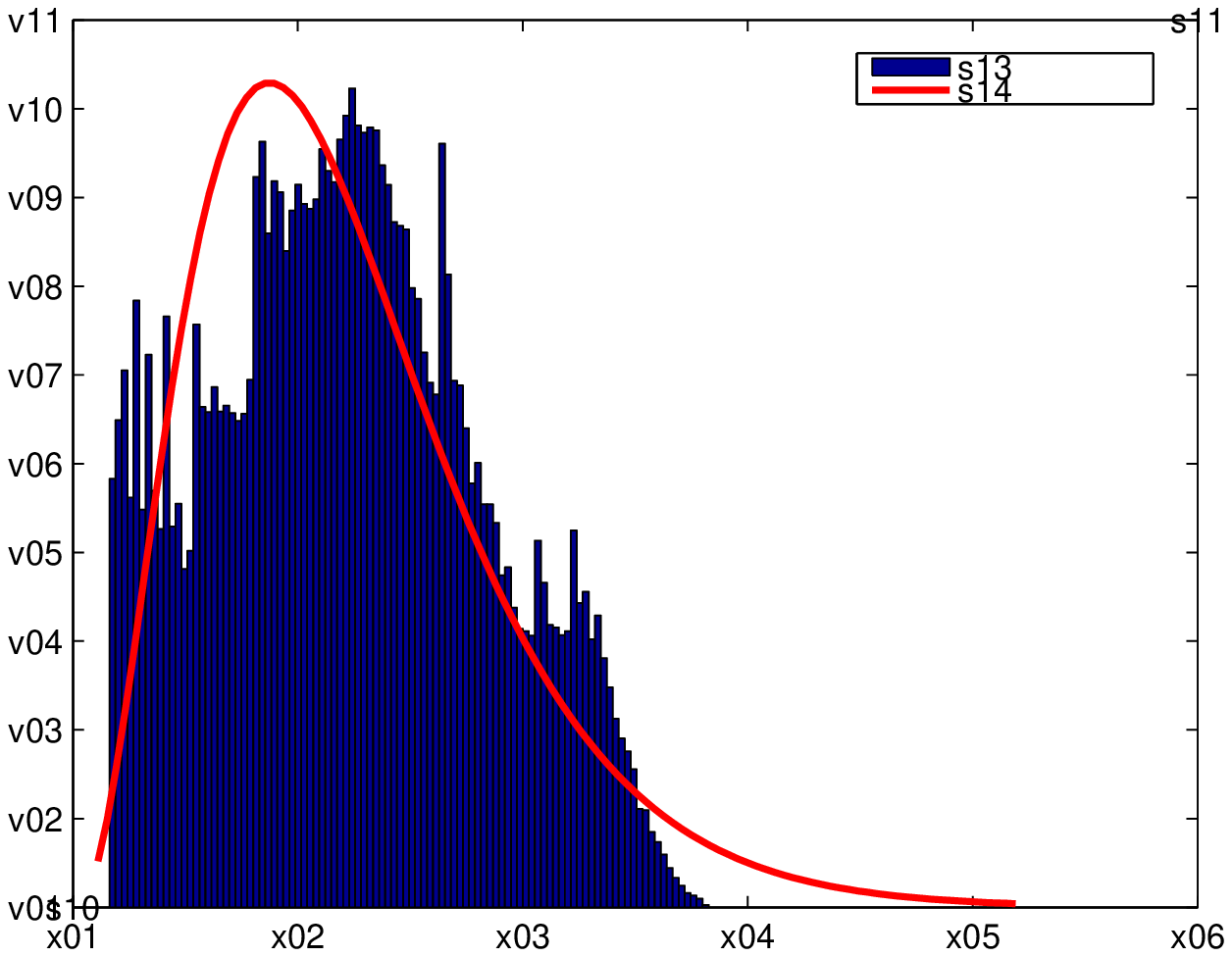}}%
\end{psfrags}%
%
}
    \caption{}
    \label{fig:distribution-geometric-half}
  \end{subfigure}

  \begin{subfigure}[b]{0.30\textwidth}
    \resizebox{\textwidth}{!}{
%
%
\begin{psfrags}%
\psfragscanon%
%
\psfrag{s10}[][]{\color[rgb]{0,0,0}\setlength{\tabcolsep}{0pt}\begin{tabular}{c} \end{tabular}}%
\psfrag{s11}[][]{\color[rgb]{0,0,0}\setlength{\tabcolsep}{0pt}\begin{tabular}{c} \end{tabular}}%
\psfrag{s12}[l][l]{\color[rgb]{0,0,0}Gamma function}%
\psfrag{s13}[l][l]{\color[rgb]{0,0,0}histogram}%
\psfrag{s14}[l][l]{\color[rgb]{0,0,0}Gamma function}%
%
\psfrag{x01}[t][t]{0}%
\psfrag{x02}[t][t]{0.2}%
\psfrag{x03}[t][t]{0.4}%
\psfrag{x04}[t][t]{0.6}%
\psfrag{x05}[t][t]{0.8}%
\psfrag{x06}[t][t]{1}%
\psfrag{x07}[t][t]{1.2}%
\psfrag{x08}[t][t]{1.4}%
%
\psfrag{v01}[r][r]{0}%
\psfrag{v02}[r][r]{5000}%
\psfrag{v03}[r][r]{10000}%
\psfrag{v04}[r][r]{15000}%
\psfrag{v05}[r][r]{20000}%
\psfrag{v06}[r][r]{25000}%
\psfrag{ypower2}[Bl][Bl]{$\times 10^{4}$}%
%
\resizebox{12cm}{!}{\includegraphics{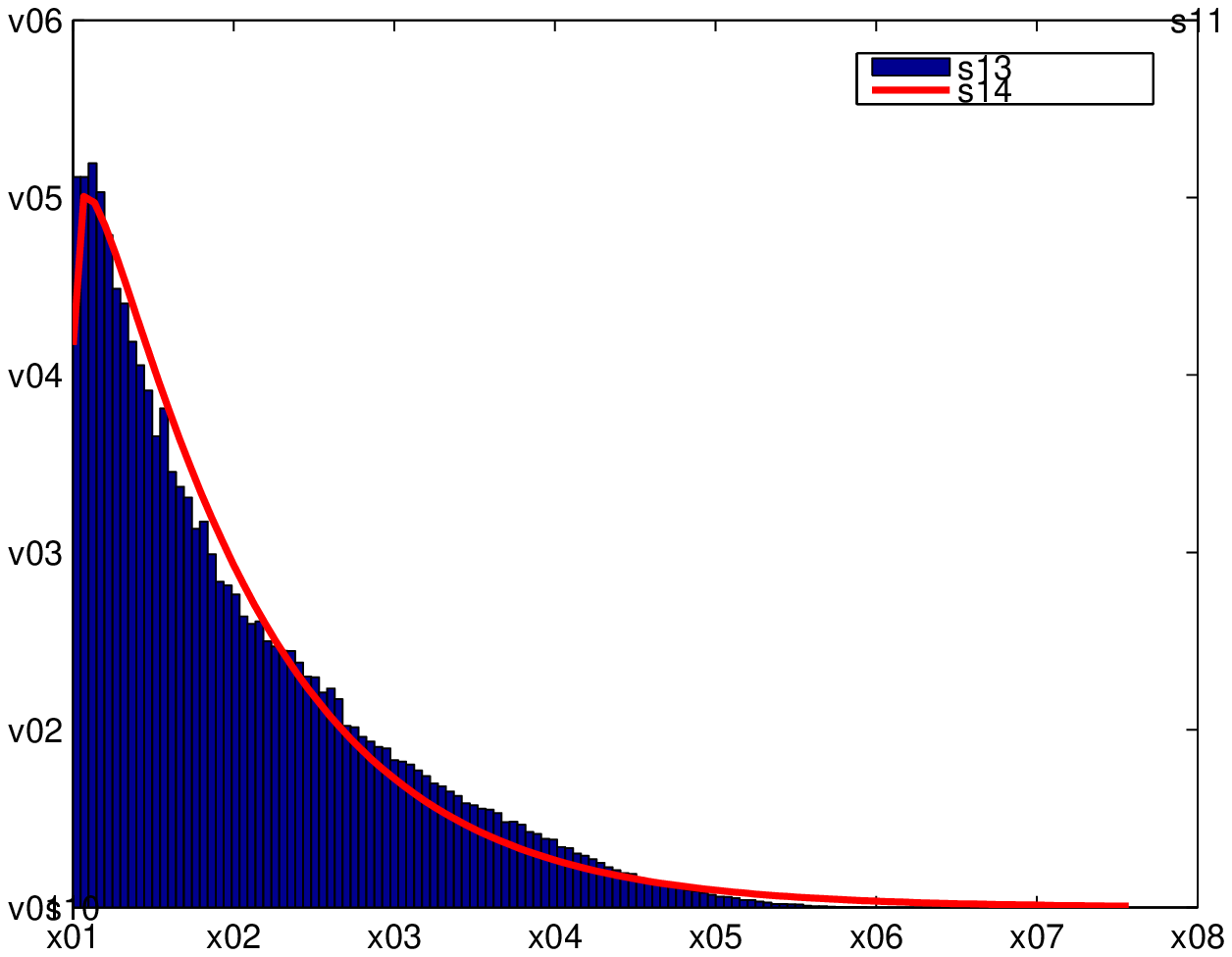}}%
\end{psfrags}%
%
}
    \caption{}
    \label{fig:distribution-expected-naive}
  \end{subfigure}
  \qquad
  \begin{subfigure}[b]{0.30\textwidth}
    \resizebox{\textwidth}{!}{
%
%
\begin{psfrags}%
\psfragscanon%
%
\psfrag{s10}[][]{\color[rgb]{0,0,0}\setlength{\tabcolsep}{0pt}\begin{tabular}{c} \end{tabular}}%
\psfrag{s11}[][]{\color[rgb]{0,0,0}\setlength{\tabcolsep}{0pt}\begin{tabular}{c} \end{tabular}}%
\psfrag{s12}[l][l]{\color[rgb]{0,0,0}Gamma function}%
\psfrag{s13}[l][l]{\color[rgb]{0,0,0}histogram}%
\psfrag{s14}[l][l]{\color[rgb]{0,0,0}Gamma function}%
%
\psfrag{x01}[t][t]{0}%
\psfrag{x02}[t][t]{0.5}%
\psfrag{x03}[t][t]{1}%
\psfrag{x04}[t][t]{1.5}%
\psfrag{x05}[t][t]{2}%
\psfrag{x06}[t][t]{2.5}%
%
\psfrag{v01}[r][r]{0}%
\psfrag{v02}[r][r]{2000}%
\psfrag{v03}[r][r]{4000}%
\psfrag{v04}[r][r]{6000}%
\psfrag{v05}[r][r]{8000}%
\psfrag{v06}[r][r]{10000}%
\psfrag{v07}[r][r]{12000}%
\psfrag{v08}[r][r]{14000}%
%
\resizebox{12cm}{!}{\includegraphics{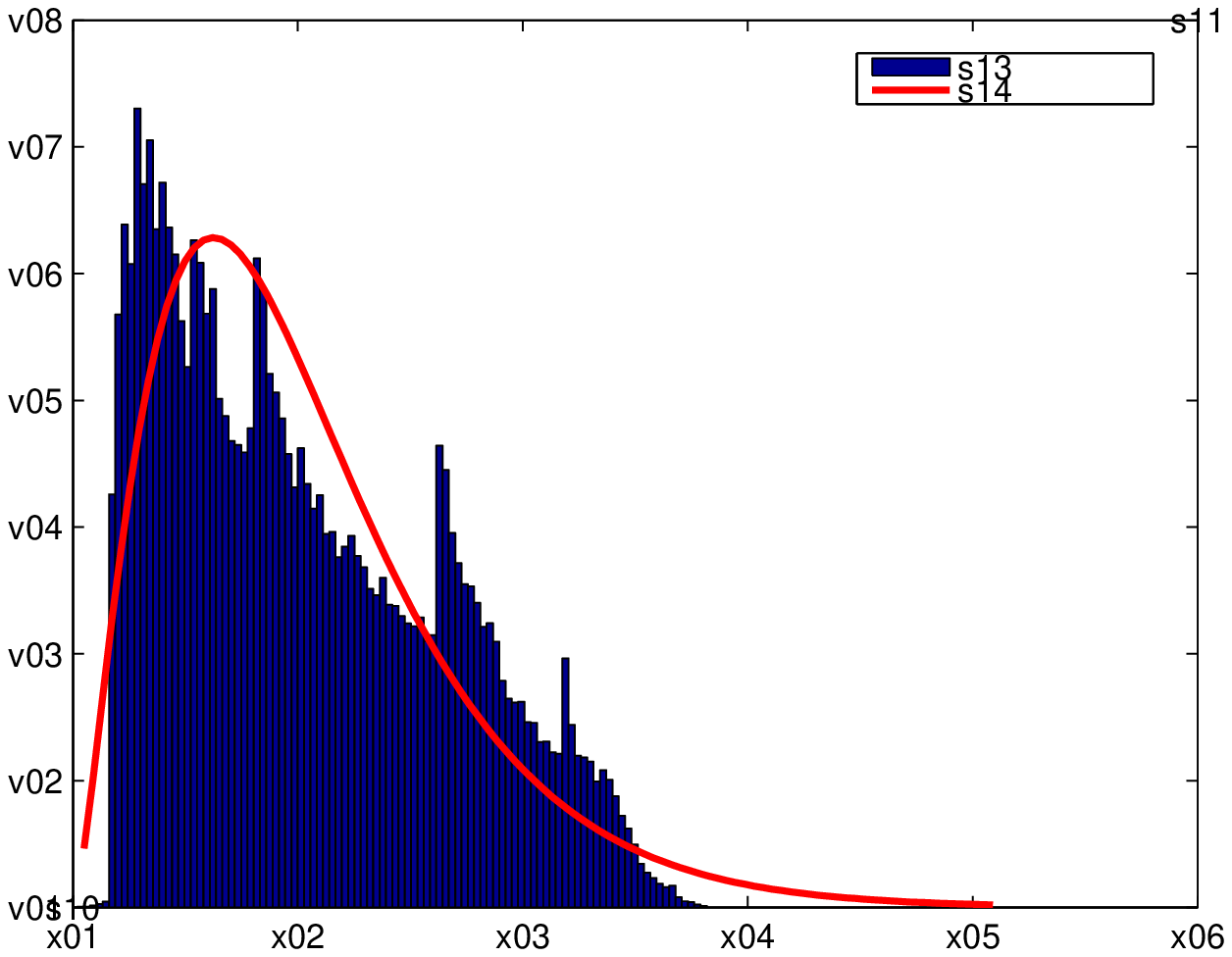}}%
\end{psfrags}%
%
}
    \caption{}
    \label{fig:distribution-geometric-naive}
  \end{subfigure}

  \caption{Histogram and Gamma function fitting of $\distanceexp{{\opinionag{\theagent}{\agent{z}}}_{|\optrustconf{}{}{}}}{\opinionag{Exp}{\agent{z}}}$ (resp.  $\distance{{\opinionag{\theagent}{\agent{z}}}_{|\optrustconf{}{}{}}}{\opinionag{Exp}{\agent{z}}}$), for $\optrustconf{}{}{} \in \set{\optrustconf{}{}{1}, \optrustconf{}{}{2}, \optrustconf{}{}{3}, \optrustconf{}{}{n}}$ Fig. (\subref{fig:distribution-expected-traditional}), (\subref{fig:distribution-expected-parallel}), (\subref{fig:distribution-expected-half}), (\subref{fig:distribution-expected-naive})  (Fig. (\subref{fig:distribution-geometric-traditional}), (\subref{fig:distribution-geometric-parallel}), (\subref{fig:distribution-geometric-half}), (\subref{fig:distribution-geometric-naive})).}
  \label{fig:distribution-ours-operators}
\end{figure*}

\subsection{Distributions of Distances}
\label{sec:distr-dist}

We first analysed the distributions of the distances between the ground truth and when using J{\o}sang's operators (Figures \ref{fig:distribution-expected-josang} and \ref{fig:distribution-geometric-josang} respectively);  
$\optrustconf{}{}{1}$ and $\fusionop{1}$ 
(Figures \ref{fig:distribution-expected-traditional} and \ref{fig:distribution-geometric-traditional});  
$\optrustconf{}{}{2}$ and $\fusionop{1}$
(Figures \ref{fig:distribution-expected-parallel} and \ref{fig:distribution-geometric-parallel}), $\optrustconf{}{}{3}$ and $\fusionop{1}$ 
(Figures \ref{fig:distribution-expected-half} and \ref{fig:distribution-geometric-half});  
$\optrustconf{}{}{n}$ and $\fusionop{1}$
(Figures \ref{fig:distribution-expected-naive} and \ref{fig:distribution-geometric-naive}).

Visual inspection of Figures \ref{fig:distribution-expected-josang}, \ref{fig:distribution-expected-traditional}, \ref{fig:distribution-expected-parallel}, \ref{fig:distribution-expected-half}, \ref{fig:distribution-expected-naive} indicates that the distances computed using the expected value distance (Definition \ref{def:distanceexp}) can be approximated (qualitatively\footnote{Figures
\ref{fig:distribution-expected-josang}, \ref{fig:distribution-geometric-josang}, 
\ref{fig:distribution-expected-traditional}, \ref{fig:distribution-geometric-traditional}, 
\ref{fig:distribution-expected-parallel}, \ref{fig:distribution-geometric-parallel}, 
\ref{fig:distribution-expected-half}, \ref{fig:distribution-geometric-half}, 
\ref{fig:distribution-expected-naive}, \ref{fig:distribution-geometric-naive}, are obtained using the Matlab function \texttt{histfit}.}) using a Gamma function, regardless of the choice of the operator. 

This result looks reasonable with respect to the opinions computed using J{\o}sang's operators (Fig. \ref{fig:distribution-expected-josang}) due to the fact that the experiment considered the Beta reputation system \cite{Ismail2002} and due to the statistical properties of J{\o}sang's operators (see \cite{Josang2006}). It is, however, interesting to note that the use of $\optrustconf{}{}{1}$ and $\fusionop{1}$, or $\optrustconf{}{}{2}$ and $\fusionop{1}$, or $\optrustconf{}{}{3}$ and $\fusionop{1}$, or $\optrustconf{}{}{n}$ and $\fusionop{1}$, all result in similar graphs.

More interesting is the fact that, considering the graphical distance (Figures \ref{fig:distribution-geometric-josang}, \ref{fig:distribution-geometric-traditional}, \ref{fig:distribution-geometric-parallel}, \ref{fig:distribution-geometric-half}, \ref{fig:distribution-geometric-naive}) we can conclude that (qualitatively) J{\o}sang's operators (and similarly $\optrustconf{}{}{3}$ with $\fusionop{1}$) are computing opinions whose geometric distance from the ground truth is not distributed using a Gamma function (Figures \ref{fig:distribution-geometric-josang} and \ref{fig:distribution-geometric-half}). On the other hand, using either $\optrustconf{}{}{1}$ with $\fusionop{1}$ (Fig. \ref{fig:distribution-geometric-traditional}), or $\optrustconf{}{}{2}$ with $\fusionop{1}$ (Fig. \ref{fig:distribution-geometric-parallel}), or $\optrustconf{}{}{n}$ with $\fusionop{1}$ (Fig. \ref{fig:distribution-geometric-naive}) returns opinions whose geometric distance from the ground truth has a interesting and regular shape, which are similar to a Gamma function or a Lognormal distribution. A comprehensive study of this is beyond the scope of the present paper and is left for future work.

\begin{table*}[htb]
  \renewcommand{\arraystretch}{1.5}
  \centering
  \begin{tabular}{|c|c|c|c|c|c|c|}\hline
    Operator & Md${}^{\star}$ $\distanceexp{{\opinionag{\theagent}{\agent{z}}}_{|\optrustconf{}{}{}}}{\opinionag{Exp}{\agent{z}}}$ & Md${}^{\star}$ $\distanceexp{{\opinionag{\theagent}{\agent{z}}}_{|J}}{\opinionag{Exp}{\agent{z}}}$ &  $s^-$ ($\times 10^{10}$) & $s^+$ ($\times 10^{10}$) & $z$ &  Incr. Performance${}^{\dagger}$\\\hline
    $\optrustconf{}{}{1}$ 	& $0.141$	& $0.144$	& $4.11$	& $4.53$	&  $-27.457^{\ddagger}$	&   $\approx~ + 5\%$ \\\hline
    \rowcolor{lightgray}
    $\optrustconf{}{}{n}$ 	& $0.156$	& $0.155$	& $4.40$	& $3.95$	&  $-29.586^{\ddagger}$  & $\approx - 5\%$\\\hline
    \rowcolor{lightgray}
    $\optrustconf{}{}{2}$	& $0.143$	& $0.142$	& $4.58$	& $3.89$	&  $-45.559^{\ddagger}$	& $\approx - 8\%$\\\hline
    \rowcolor{lightgray}
    $\optrustconf{}{}{3}$ 	& $0.163$	& $0.145$	& $5.12$	& $3.51$	&  $-104.098^{\ddagger}$ & $\approx - 19\%$\\\hline
    \multicolumn{7}{l}{${}^{\star}$ Median; \qquad ${}^{\dagger}$ computed as $(s^+ - s^-)/(s^+ + s^-)$; \qquad  ${}^{\ddagger} p < 0.001$.}
  \end{tabular}
  \caption{Wilcoxon signed-rank significance tests of distances derived using the expected value distance $\distanceexp{\cdot}{\cdot}$. In grey are the cases where the proposed operator $\optrustconf{}{}{} \in \set{\optrustconf{}{}{1}, \optrustconf{}{}{2}, \optrustconf{}{}{3}, \optrustconf{}{}{n}}$ did not outperform J{\o}sang's operators. Results are ordered by the increment of performance in descending order (i.e. the first row is the best one).\label{tablestata}}
\end{table*}

\begin{table*}[htb]
  \renewcommand{\arraystretch}{1.5}
  \centering
  \begin{tabular}{|c|c|c|c|c|c|c|}\hline
    Operator & Md${}^{\star}$ $\distance{{\opinionag{\theagent}{\agent{z}}}_{|\optrustconf{}{}{}}}{\opinionag{Exp}{\agent{z}}}$ & Md${}^{\star}$ $\distance{{\opinionag{\theagent}{\agent{z}}}_{|J}}{\opinionag{Exp}{\agent{z}}}$ &  $s^-$ ($\times 10^{10}$) & $s^+$ ($\times 10^{10}$) & $z$ &  Incr. Performance${}^{\dagger}$\\\hline
    $\optrustconf{}{}{1}$ 	& $0.415$	& $0.585$	& $1.91$	& $6.70$	& $-310.462^{\ddagger}$	& $\approx + 56\%$ \\\hline
    $\optrustconf{}{}{n}$ 	& $0.454$	& $0.608$	& $1.93$	& $6.42$	& $-297.432^{\ddagger}$	&  $\approx + 54\%$ \\\hline
    $\optrustconf{}{}{2}$ 	& $0.457$	& $0.584$	& $2.31$	& $6.08$	& $-248.968^{\ddagger}$	& $\approx + 45\%$ \\\hline
    \rowcolor{lightgray}
    $\optrustconf{}{}{3}$ 	& $0.607$	& $0.593$	& $4.92$	& $3.72$	& $-77.760^{\ddagger}$	& $\approx - 14\%$ \\\hline
    \multicolumn{7}{l}{${}^{\star}$ Median; \qquad ${}^{\dagger}$ computed as $(s^+ - s^-)/(s^+ + s^-)$; \qquad  ${}^{\ddagger} p < 0.001$.}
  \end{tabular}
  \caption{Wilcoxon signed-rank significance tests of distances derived using the geometric distance $\distance{\cdot}{\cdot}$. In grey are the cases where the proposed operator $\optrustconf{}{}{} \in \set{\optrustconf{}{}{1}, \optrustconf{}{}{2}, \optrustconf{}{}{3}, \optrustconf{}{}{n}}$ did not outperform J{\o}sang's operators. Results are ordered by the increment of performance in descending order (i.e. the first row is the best one).\label{tablestatb}}
\end{table*}

\begin{figure*}[htb]
  \centering
  \begin{subfigure}[b]{0.48\textwidth}
    \resizebox{\textwidth}{!}{\input{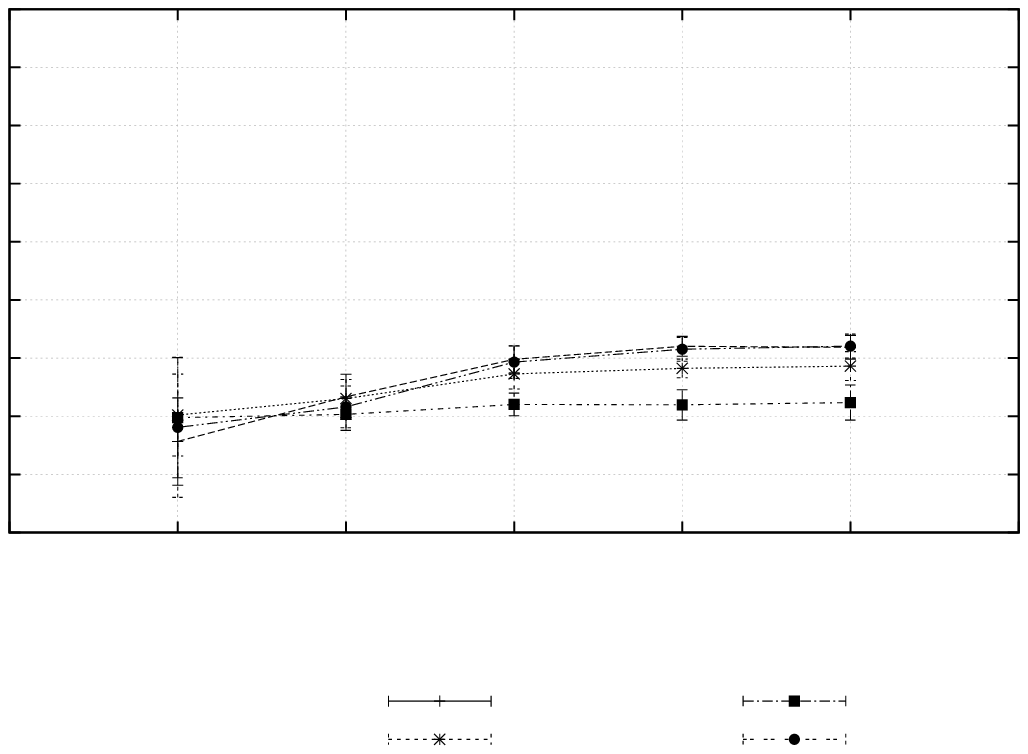}}
    \caption{}
    \label{fig:averaged-expected}
  \end{subfigure}
  \qquad
  \begin{subfigure}[b]{0.48\textwidth}
    \resizebox{\textwidth}{!}{\input{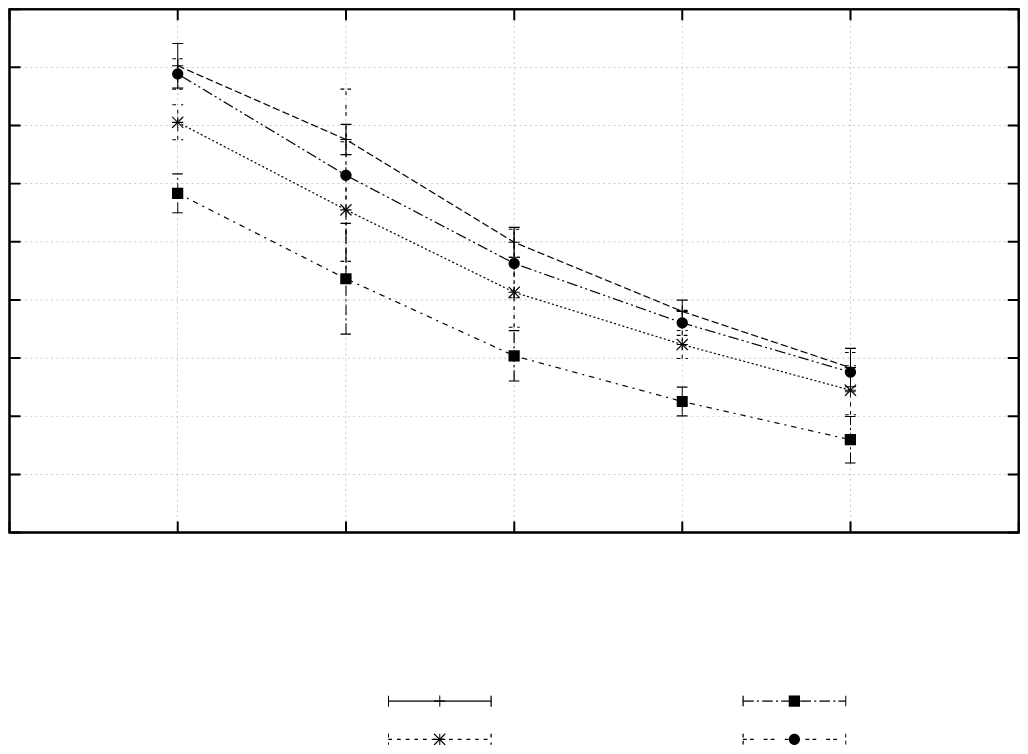}}
    \caption{}
    \label{fig:averaged-geometric}
  \end{subfigure}

  \begin{subfigure}[b]{0.48\textwidth}
    \resizebox{\textwidth}{!}{\input{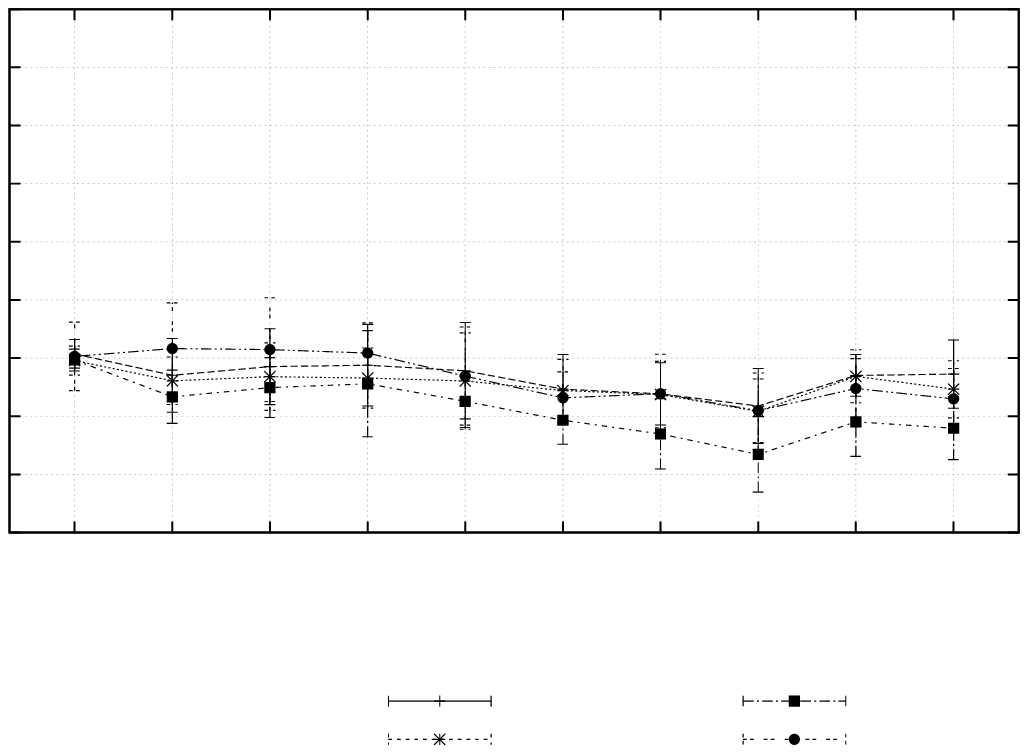}}
    \caption{}
    \label{fig:averaged-expected-bootstrap}
  \end{subfigure}
  \qquad
  \begin{subfigure}[b]{0.48\textwidth}
    \resizebox{\textwidth}{!}{\input{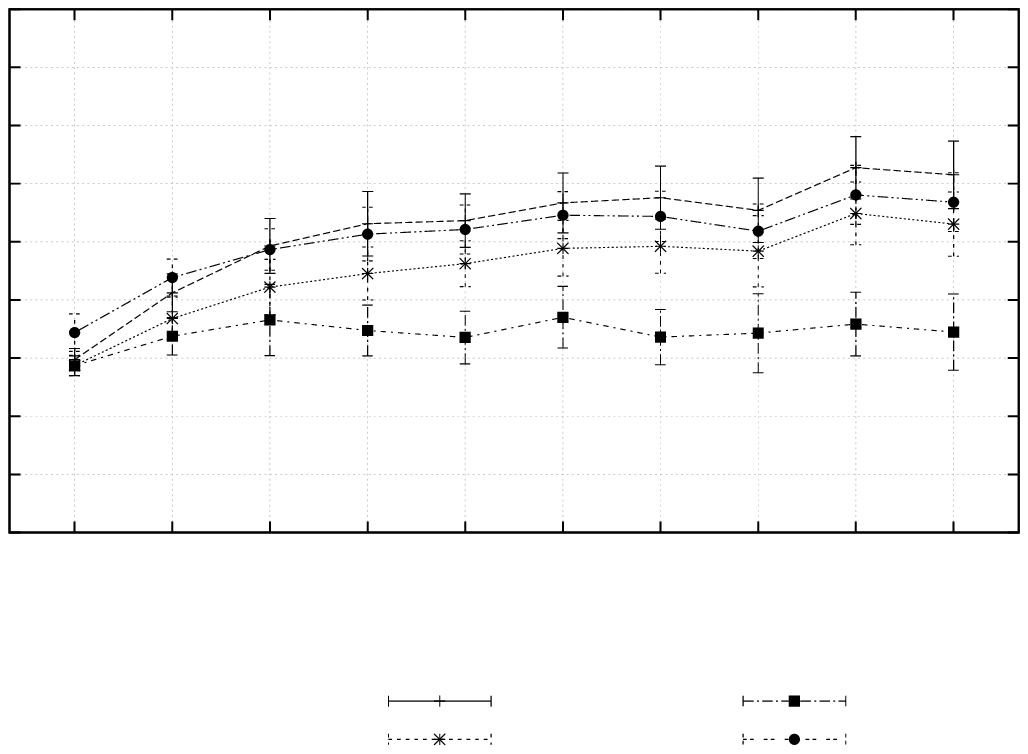}}
    \caption{}
    \label{fig:averaged-geometric-bootstrap}
  \end{subfigure}  
  \caption{{Mean and Standard Deviation of $\overline{r_E(\agent{z})}$ (resp. $\overline{r_G(\agent{z})}$) w.r.t. $\linkprob$, Fig. (\subref{fig:averaged-expected}) (resp. Fig. (\subref{fig:averaged-geometric})), and w.r.t. $\lboot$, Fig. (\subref{fig:averaged-expected-bootstrap}) (resp. Fig. (\subref{fig:averaged-geometric-bootstrap})). Please note that the scale is logarithmic.}}
  \label{fig:results}
\end{figure*}

\subsection{Analysis Using the Wilcoxon Test}
\label{sec:analys-using-wilc}

Since the Kolmogorov--Smirnov tests reported that the distribution of the differences between each pair of distance 
$\tuple{\distanceexp{{\opinionag{\theagent}{\agent{z}}}_{|\optrustconf{}{}{}}}{\opinionag{Exp}{\agent{z}}}, \distanceexp{{\opinionag{\theagent}{\agent{z}}}_{|J}}{\opinionag{Exp}{\agent{z}}}}$ 
(resp. \linebreak  $\tuple{\distance{{\opinionag{\theagent}{\agent{z}}}_{|\optrustconf{}{}{}}}{\opinionag{Exp}{\agent{z}}}, \distance{{\opinionag{\theagent}{\agent{z}}}_{|J}}{\opinionag{Exp}{\agent{z}}}}$) 
are significantly different from normal distributions ($p<0.001$), we analysed them using the  Wilcoxon Signed-Rank Test (WSRT).

This test allows us to conclude whether or not the median of the differences of such pairs of distances is  statistically equal to 0.
Moreover, looking at the median of the distribution of each distance, we can also verify the significance of the direction of the difference. In other words, if the distribution of $\distanceexp{{\opinionag{\theagent}{\agent{z}}}_{|\optrustconf{}{}{}}}{\opinionag{Exp}{\agent{z}}}$ has median equal to $a$ and the median of the distribution of $\distanceexp{{\opinionag{\theagent}{\agent{z}}}_{|J}}{\opinionag{Exp}{\agent{z}}}$ is $b$, we can verify the hypothesis that the difference is significantly positive (if $a > b$) or negative (if $a < b$).
Therefore, if the difference is significant, this tests shows that one distance from the ground truth is significantly higher than the other. 
Furthermore, WSRT calculates the sum of the ranks of the pairwise positive differences $s^+$ and negative differences  $s^-$. This can be used to indicated the size of this difference: we consider the following simple formula for determining this size which turns to be our measure of increment of performance, namely $(s^+ - s^-)/(s^+ + s^-)$ .

For improving the readability of the results, we grouped the results of the WSRT test according to the type of distance used. Table \ref{tablestata} shows the results of the WSRT considering measure computed using the expected value distance $\distanceexp{\cdot}{\cdot}$, while Table \ref{tablestatb} shows the results of the WSRT considering measure computed using the geometrical distance $\distance{\cdot}{\cdot}$.

From Tables \ref{tablestata} and \ref{tablestatb} we can conclude that, regardless of the choice of the operator and the type of measure used, the difference between the opinions determined with the proposed operators and the J{\o}sang's is significantly different ($p < 0.001$).

Concerning the expected value distances, from Table \ref{tablestata} we can see that the WSRT highlights that in the case that $\optrustconf{}{}{1}$ with $\fusionop{1}$ is used, the derived opinion is significantly ($\approx + 5 \%$) closer to the ground truth than the opinion computed using J{\o}sang's operators. This is not true according to the other choices of operators, which return opinion whose distance from the ground truth is greater (between $5\%$ and $19\%$) than J{\o}sang's operator.

However, if we consider the graphical distances, from Table \ref{tablestatb} we can see that (in order) $\optrustconf{}{}{1}$, $\optrustconf{}{}{n}$, and $\optrustconf{}{}{2}$, each of which with $\fusionop{1}$, outperform J{\o}sang's operators. Comparing these increments of performances, we can also see that the opinions derived using these operators are much closer to the ground truth (between $\approx +56\%$ and $\approx +45\%$) than J{\o}sang's operators.


\subsection{Results w.r.t. Experiment Parameters}
\label{sec:dynamics-results}

Considering the dynamics of the results,  Figure \ref{fig:results} depicts the mean and the standard deviation\footnote{Although in Sects. \ref{sec:distr-dist} and \ref{sec:analys-using-wilc} we show that the distances are not normally distributed and thus from a statistical point of view medians rather than means should be considered. Here we are more interested in the qualitative dynamics of values obtained by varying the parameters of the experiment, and thus we rely on graphical representations of mean and standard deviation.} of $\overline{r_E(\agent{z})}$ and $\overline{r_G(\agent{z})}$ for each set of operators used --- \viz{} $\optrustconf{}{}{1}$ and $\fusionop{1}$,  \optrustconf{}{}{2} and \fusionop{1}, \optrustconf{}{}{2} and \fusionop{1}, \optrustconf{}{}{n} and \fusionop{1} --- w.r.t. the two variables considered, namely the probability of connections $\linkprob$ (Figures \ref{fig:averaged-expected} and \ref{fig:averaged-geometric}), and the bootstrap time $\lboot$ (Figures \ref{fig:averaged-expected-bootstrap} and \ref{fig:averaged-geometric-bootstrap}).

Considering that distances computed using the expected value distance measure, from Figure \ref{fig:averaged-expected} we can infer that on average J{\o}sang's operators are performing better for small values of probability of connections $\linkprob$, and the greater the $\linkprob$, the better are the performance of operators $\optrustconf{}{}{1}$, $\optrustconf{}{}{n}$ and $\optrustconf{}{}{2}$ (each of which with $\fusionop{1}$). A visual inspection of Figure \ref{fig:averaged-expected-bootstrap}, however, does not highlight any specific pattern or regularity in the dynamics of the system varying $\lboot$ (considering expected value distance).

On the other hand, if we consider the results derived using the geometric distance, Fig. \ref{fig:averaged-geometric} qualitatively shows that the greater the probability of connections $\linkprob$, the more similar the operators we propose in this paper are to J{\o}sang's. In fact, the more connected the network, the more the bootstrapping phase is important, and this is independent of the choice of operators.
However, when we are considering the dynamics of the bootstrapping phase (Fig. \ref{fig:averaged-geometric-bootstrap}), we conclude that the smaller the uncertainty (i.e. the greater the number of interactions among the agents during the bootstrapping phase), the better the proposed operators perform. It is worth to notice that for $\lboot = 2$, which leads to a high uncertain opinions, $\optrustconf{}{}{1}$ and $\fusionop{1}$, $\optrustconf{}{}{2}$ and $\fusionop{1}$, $\optrustconf{}{}{3}$ and $\fusionop{1}$ perform similarly to J{\o}sang's operators, while choosing $\optrustconf{}{}{n}$ and $\fusionop{1}$ leads to a significantly better result. We will investigate this interesting results in future works.

\subsection{Summary}

To summarise our empirical evaluation, we observe that:
\begin{enumerate}
\item the operators $\optrustconf{}{}{1}$, $\optrustconf{}{}{2}$, $\optrustconf{}{}{3}$, $\optrustconf{}{}{n}$ (in conjunction to $\fusionop{1}$), similarly to J{\o}sang's operators, return opinions whose expected value distance distribution from the ground truth is close to a Gamma function (Figures \ref{fig:distribution-expected-josang}, \ref{fig:distribution-expected-traditional}, \ref{fig:distribution-expected-parallel}, \ref{fig:distribution-expected-naive});
\item the operators $\optrustconf{}{}{1}$, $\optrustconf{}{}{2}$, $\optrustconf{}{}{n}$ (in conjunction with $\fusionop{1}$), differ from J{\o}sang's operators and $\optrustconf{}{}{3}$ with $\fusionop{1}$, and return opinions whose geometric distance distribution from the ground truth shows some qualitative regularity resembling a Gamma function or a lognormal distribution (Figures \ref{fig:distribution-geometric-josang}, \ref{fig:distribution-geometric-traditional}, \ref{fig:distribution-geometric-parallel}, \ref{fig:distribution-geometric-naive});
\item the operator $\optrustconf{}{}{1}$ with $\fusionop{1}$ outperforms J{\o}sang's operators in a statistically significant manner, both considering the expected value distance ($\approx +5\%$) and the geometrical distance ($\approx + 56\%$);
\item the rank of operators (each of which used in conjunction with \fusionop{1}) w.r.t. their performances is independent from the choice of expected value distance, or geometrical distance, and is as follows: $\optrustconf{}{}{1} \succ \optrustconf{}{}{n} \succ \optrustconf{}{}{2} \succ \optrustconf{}{}{3}$;
\item the less the probability of connections, the more $\optrustconf{}{}{} \in \set{\optrustconf{}{}{1}, \optrustconf{}{}{2}, \optrustconf{}{}{3}, \optrustconf{}{}{n}}$ returns opinions closer (according to the graphical distance) to the ground truth than J{\o}sang's operators;
\item the less the uncertainty (i.e. the more the bootstrap time), the more $\optrustconf{}{}{} \in \set{\optrustconf{}{}{1}, \optrustconf{}{}{2}, \optrustconf{}{}{3}, \optrustconf{}{}{n}}$ returns opinions closer (according to the graphical distance) to the ground truth than J{\o}sang's operators.
\end{enumerate}

\section{Conclusions and Future Works}
\label{sec:conclusions}

The discount and the fusion operators play an important role in standard Subjective Logic, and form the core of the Beta Reputation System. In fact, they are used to combine and discount reputation information from multiple agents within a trust network.

In this paper, following our earlier work in \cite{Cerutti2013a,Cerutti2013}, we introduced a set of intuitive desiderata that operators for discounting and fusion of opinions should provide. From these, we derived a set of requirements and a family of operators, and proved that these satisfy the desiderata, while J{\o}sang's operators do not . We empirically evaluated the derived operators in a trust scenario and the results shown in Section \ref{sec:results} suggest that: 
\begin{itemize}
\item one operator taken from the family satisfying the desiderata always outperforms J{\o}sang's operators;
\item according to the geometrical distance among opinions, most of the operators satisfying the desiderata outperform J{\o}sang's operators;
\item there are relationships between the structure of the trust network and the achieved increments of performances.
\end{itemize}

In particular, the Wilcoxon signed-rank significance test discussed in Section \ref{sec:analys-using-wilc}, shows that the discounting operator ($\optrustconf{}{}{1}$), used in conjunction with the fusion operator \fusionop{1}, returns opinions closer to the ground truth than J{\o}sang's operators of $5\%$ considering the expected value distance, and of $56\%$ considering the graphical distance. Therefore, it seems that allowing a reduction of the amount of uncertainty in discounting opinion results on an increment of the performances not only geometrically, but also when the expected values are considered. 
 
An empirical evaluation of the graphical operators on real cases, e.g. \citep{Guha2004}, is already envisaged as the main future work.
In addition, we want to develop graphical operators analogous to other Subjective Logic operators, and we intend to study these, as well as investigate their properties. 


\bibliographystyle{spbasic}
\bibliography{biblio}

\appendix

\section{The Geometry of Subjective Logic}
\label{sec:geom-subj-logic}

A SL opinion $O \triangleq \opinion{O}$ is a point in the $\realset^3$ space, identified by the coordinate \belief{O} for the first axis, \disbelief{O} for the second axis, and \uncertainty{O} for the third axis. However, due to the requirement that $\belief{O} + \disbelief{O} + \uncertainty{O} = 1$, an opinion is a point inside (or at least on the edges of) the triangle $\tri{BDU}$ shown in Fig. \ref{fig:3d-sl}, where $B = \tuple{1, 0, 0}, D=\tuple{0,1,0}, U=\tuple{0,0,1}$.

\begin{figure}[h]
  \centering
  \includegraphics[scale=0.6]{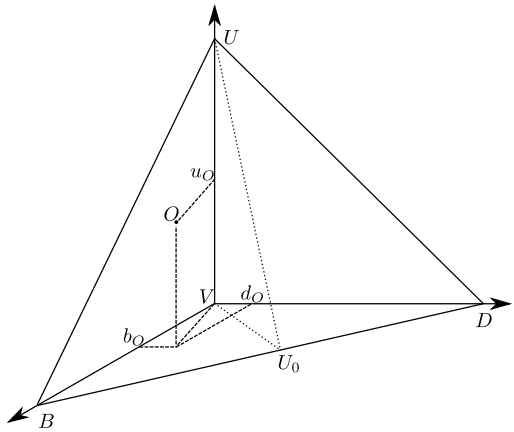}
  \caption{The Subjective Logic plane region}
  \label{fig:3d-sl}
\end{figure}

\begin{definition}
  The \emph{Subjective Logic plane region \tri{BDU}} is the triangle whose vertices are the points $B \triangleq \tuple{1,0,0}$, $D\triangleq\tuple{0,1,0}$, and $U\triangleq\tuple{0,0,1}$ on a $\realset^3$ space where the axes are respectively the one of belief, disbelief, and uncertainty predicted by SL.
\end{definition}

Since an opinion is a point inside  triangle $\tri{BDU}$, it can be mapped to a point in Fig. \ref{fig:basic-sl}. This representation is similar to the one used in \citep{Josang2001} for representing opinions in SL, but here the belief and disbelief axes are swapped.

\begin{figure}[h]
  \centering
  \includegraphics[scale=0.6]{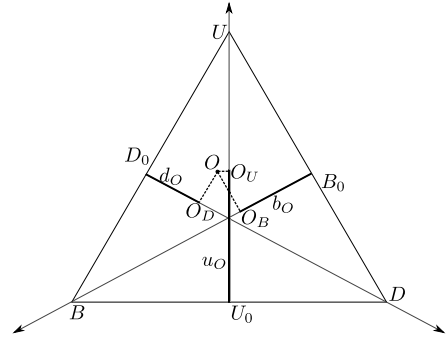}
  \caption{An opinion $O \triangleq \opinion{O}$ in SL after the $1:\frac{\sqrt{3}}{\sqrt{2}}$ scale. The belief axis is the line from $B_0$ (its origin) toward the $B$ vertex, the disbelief axis is the line from $D_0$ toward the $D$ vertex, and the uncertainty axis is the line from $U_0$ toward the $U$ vertex}
  \label{fig:basic-sl}
\end{figure}

In order to keep the discussion consistent with J{\o}sang's work \citep{Josang2001}, in what follows we will scale triangle \tri{BDU} by a factor $1:\frac{\sqrt{3}}{\sqrt{2}}$ thus obtaining that $|\vv{B_0 B}| = |\vv{D_0 D}| = |\vv{U_0 U}| = 1$. 

These geometric relations lie at the heart  of the Cartesian transformation operator which is the subject of the next subsection.

\subsection{The Cartesian Representation of Opinions}
\label{sec:cart-repr-opin}

As shown in \ref{sec:geom-subj-logic}, an opinion in SL can be represented as a point in a planar figure (Fig. \ref{fig:basic-sl}) laying on a Cartesian plane. In this section we will introduce the Cartesian transformation operator which returns the Cartesian coordinate of an opinion.

First of all, let us define the axes of the Cartesian system we will adopt. 

\begin{definition}
  Given the SL plane region \tri{BDU}, the \emph{associated Cartesian system} is composed by two axes, named respectively $x, y$, where the unit vector of the $x$ axis $\versor{x} = \frac{1}{|\vv{BD}|} \vv{BD}$, the unit vector of the $y$ axis $\versor{y} = \versor{\uncertainty{}}$, and $B$ is the origin.
\end{definition}

Figure \ref{fig:cartesian-plane} depicts this Cartesian system.

\begin{figure}[h]
  \centering
  \includegraphics[scale=0.5]{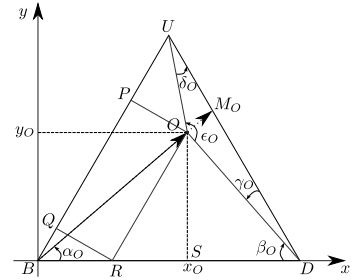}
  \caption{An opinion and its representation in the Cartesian system}
  \label{fig:cartesian-plane}
\end{figure}

The correspondence between the three values of an opinion and the corresponding coordinate in the Cartesian system we defined is shown in the following proposition (proved in \citep{Cerutti2013}).

\begin{restatable}{propn}{tocartesian} \emph{\citep[Prop. 1]{Cerutti2013}}
\label{propn:cartesian-transformation}
  Given a SL plane region \tri{BDU} and its associated Cartesian system $\tuple{x, y}$, an opinion $O \triangleq \opinion{O}$ is identified by the coordinate $\tuple{x_O, y_O}$ \suchthat:
  \begin{itemize}
  \item $\displaystyle{x_O \triangleq \frac{\disbelief{O} + \uncertainty{O}~\cos(\frac{\pi}{3})}{\sin(\frac{\pi}{3})}}$
  \item $y_O \triangleq \uncertainty{O}$
  \end{itemize}
\end{restatable}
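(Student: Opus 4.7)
The plan is to establish the Cartesian coordinates of the three vertices $B$, $D$, $U$ of the scaled equilateral triangle $\tri{BDU}$ and then recover $O$ as a barycentric combination of those vertices with weights $\belief{O}$, $\disbelief{O}$, $\uncertainty{O}$. The key geometric fact to exploit is the scaling convention from the preceding subsection: after rescaling by the factor $1:\sqrt{3}/\sqrt{2}$, each ``axis segment'' $\vv{B_0 B}$, $\vv{D_0 D}$, $\vv{U_0 U}$ has length $1$, where $B_0$, $D_0$, $U_0$ are the midpoints of the sides opposite $B$, $D$, $U$ respectively. In particular, $|\vv{U_0 U}| = 1$ is the altitude from $U$ onto $BD$, which, combined with the fact that $\tri{BDU}$ is equilateral, pins down the side length $|\vv{BD}| = \tfrac{2}{\sqrt{3}} = 1/\sin(\pi/3)$.

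First I would place $B$ at the origin of the associated Cartesian system, since $B$ is declared to be the origin. Because $\versor{x} = \frac{1}{|\vv{BD}|}\vv{BD}$, the point $D$ lies on the positive $x$-axis at coordinate $(|\vv{BD}|, 0) = (1/\sin(\pi/3), 0)$. Because $\versor{y} = \versor{\uncertainty{}}$ is the unit vector along the uncertainty axis, which is perpendicular to $BD$ and passes through $U_0$ (the midpoint of $BD$), the point $U$ lies directly above $U_0$ at height $|\vv{U_0 U}| = 1$. Thus $U = \bigl(\tfrac{1}{2}\cdot 1/\sin(\pi/3),\, 1\bigr) = \bigl(\cos(\pi/3)/\sin(\pi/3),\, 1\bigr)$, using $\cos(\pi/3)=1/2$.

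Next I would express $O$ as an affine/barycentric combination of the three vertices. Since $\belief{O} + \disbelief{O} + \uncertainty{O} = 1$ and $O$ is the point in $\tri{BDU}$ whose barycentric coordinates with respect to $(B,D,U)$ are exactly $(\belief{O},\disbelief{O},\uncertainty{O})$ (this is precisely the identification between SL opinions and points of the plane region fixed in Section \ref{sec:geom-subj-logic}), we have
\begin{equation*}
O = \belief{O}\, B + \disbelief{O}\, D + \uncertainty{O}\, U .
\end{equation*}
Substituting the coordinates found above and collecting components gives $x_O = \disbelief{O}/\sin(\pi/3) + \uncertainty{O}\cos(\pi/3)/\sin(\pi/3)$ and $y_O = \uncertainty{O}$, which is exactly the claimed formula.

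The only nontrivial step is the geometric bookkeeping in paragraph one — correctly identifying the side length $|\vv{BD}|$ from the normalisation $|\vv{U_0 U}|=1$ and checking that $\versor{y}$ really is orthogonal to $\versor{x}$ (so that one can read off the Cartesian components component-wise). Once those two facts are pinned down, the remainder of the argument is a single line of linear algebra. I expect that verifying the barycentric identification of $O$ with $\belief{O}B + \disbelief{O}D + \uncertainty{O}U$ — rather than, say, some other convex combination coming from the alternative representations of opinions — will be the most delicate conceptual point, but it is justified directly by the definition of the SL plane region and the constraint $\belief{O} + \disbelief{O} + \uncertainty{O} = 1$.
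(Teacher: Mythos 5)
Your proof is correct, but it follows a genuinely different route from the paper's. The paper works directly in the figure: it introduces auxiliary points $P$, $Q$, $R$, $S$, shows $\ang{BPO}=\frac{\pi}{2}$ and $\ang{QRB}=\frac{\pi}{6}$, and obtains $x_O$ as the sum of two segment lengths, $|\vv{BR}| = \disbelief{O}/\sin(\frac{\pi}{3})$ contributed by the disbelief component and $|\vv{RS}| = \uncertainty{O}\cos(\frac{\pi}{3})/\sin(\frac{\pi}{3})$ contributed by the uncertainty component, treating $y_O=\uncertainty{O}$ as trivial. You instead compute the Cartesian coordinates of the three vertices once --- $B=(0,0)$, $D=(1/\sin(\frac{\pi}{3}),0)$, $U=(\cos(\frac{\pi}{3})/\sin(\frac{\pi}{3}),1)$ --- and read off $O = \belief{O}B + \disbelief{O}D + \uncertainty{O}U$ by linearity. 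Your bookkeeping checks out: the normalisation $|\vv{U_0U}|=1$ together with equilaterality gives $|\vv{BD}|=1/\sin(\frac{\pi}{3})$, the median from $U$ in an equilateral triangle is orthogonal to $BD$ so $\versor{y}\perp\versor{x}$, and the barycentric identification is legitimate because an opinion \emph{is} the point $\belief{O}(1,0,0)+\disbelief{O}(0,1,0)+\uncertainty{O}(0,0,1)$ in $\realset^3$ and the passage to the scaled Cartesian system is an affine (similarity) map, which preserves barycentric coordinates --- a point worth stating explicitly to close the ``delicate'' step you flag. What your approach buys is independence from the figure and a one-line derivation that makes the affine structure of the formula transparent; what the paper's buys is that it stays entirely within the elementary trigonometric vocabulary (angles $\alpha_O,\beta_O,\dots$) that the subsequent definitions and the proof of Theorem~\ref{thm:familydiscount} reuse.
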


\begin{proof}
    Proving that $y_O \triangleq \uncertainty{O}$ is trivial. 

    Let us focus on the first part of the proposition. Consider Figure \ref{fig:cartesian-plane}. Given $O$, we note that the for the point $P$,  $\frac{1}{|\vv{P O}|} \vv{P O} = \versor{b}$ (\ie{} $\vv{PO}$ is parallel to the disbelief axis) and $\frac{1}{|\vv{BP}|} \vv{BP} = \frac{1}{|\vv{BU}|} \vv{BU}$ (\ie{} $P$ is on the line $\vv{BP}$), and therefore $\ang{BPO} = \frac{\pi}{2}$. Then we must determine $Q$ and $R$ \suchthat{} $\vv{Q R} = \vv{P O}$ and $y_{R} = 0$. By construction $|\vv{P O}| = |\vv{Q R}| = \disbelief{O}$, $\ang{Q R B} = \frac{\pi}{6}$, $\ang{ORD} = \frac{\pi}{3}$, and $x_O \triangleq |\vv{BS}| =  |\vv{BR}| + |\vv{RS}|$, where $|\vv{BR}| = \frac{\disbelief{O}}{\sin(\frac{\pi}{3})}$, and $|\vv{RS}| = \frac{\uncertainty{O}}{\sin(\frac{\pi}{3})} \cos(\frac{\pi}{3})$. 
\end{proof}

There are some notable elements of Fig. \ref{fig:cartesian-plane} that we will repeatedly use below, and we therefore define them as follows:
\begin{itemize}
\item the angle $\alpha_O$ determined by the $x$ axis and the vector $\vv{BO}$;
\item the three angles ($\gamma_O, \delta_O,$ and $\epsilon_O$) of the triangle $\tri{ODU}$, namely the triangle determined by linking the point $O$ with the vertex $D$ and $U$ through straight lines.
\end{itemize}

\begin{definition}
  \label{def:properties}
  Given the SL plane region \tri{BDU}, given $O=\opinion{O}$ whose coordinates are $\tuple{x_O, y_O}$ where $\displaystyle{x_O \triangleq \frac{\disbelief{O} + \uncertainty{O}~\cos(\frac{\pi}{3})}{\sin(\frac{\pi}{3})}}$ and $y_O \triangleq \uncertainty{O}$, let us define and (via trivial trigonometric relations) compute the following.
  \begin{itemize}
  \item $\displaystyle{\alpha_O \triangleq \ang{OBD}} = $

    $\left\{
        \begin{array}{l l}
          0 & \mbox{if } \belief{O} = 1\\
          \displaystyle{\arctan\left(\frac{\uncertainty{O}~\sin(\frac{\pi}{3})}{\disbelief{O} + \uncertainty{O}~\cos(\frac{\pi}{3})}\right)} & \mbox{otherwise}\\
        \end{array} \right.$;
  \item $\displaystyle{\beta_O \triangleq \ang{ODB} } = $

    $\left\{
      \begin{array}{l l}
        \displaystyle{\frac{\pi}{3}} & \mbox{if } \disbelief{O} = 1\\
        \displaystyle{\arctan\left(\frac{\uncertainty{O}~\sin(\frac{\pi}{3})}{1-(\disbelief{O}+\uncertainty{O}~\cos(\frac{\pi}{3}))}\right)} & \mbox{otherwise}\\
      \end{array}\right.$;
  \item $\displaystyle{\gamma_O \triangleq \ang{ODU} = \frac{\pi}{3}} - \beta_O$;
  \item $\displaystyle{\delta_O \triangleq \ang{OUD} = }$

      $\left\{
        \begin{array}{l l}
          0 & \mbox{if } \uncertainty{O} = 1\\
          \displaystyle{\arcsin\left(\frac{\belief{O}}{|\vv{OU}|}\right)} & \mbox{otherwise}\\
        \end{array}\right.$;
  \item $\epsilon_O \triangleq \ang{DOU} = \pi - \gamma_O - \delta_O$;
  \end{itemize}

\noindent
where $\displaystyle{|\vv{OU}| = \sqrt{\frac{1}{3} (1 + \disbelief{O} - \uncertainty{O})^2 + \belief{O}^2}}$.

The angle $\alpha_O$ is called the  \emph{direction of $O$}.

Equivalently, we can  write $\vv{BO}$ or $\tuple{B, \alpha_O, |\vv{BO}|}$.

\end{definition}

Finally, as an element of SL is bounded to have its three components between $0$ and $1$, we are also interested in determining the point $M_O$ such that the vector $\vv{BM_O}$ has the maximum magnitude given (a) the direction $\alpha_O$ of an opinion $O$, and (b) $M_O$ is a SL opinion. In other words, determining the magnitude of $\vv{BM_O}$ will allow us to re-define the vector $\vv{BO}$ as a fraction  of $\vv{BM_O}$. 

\begin{definition}
  \label{defn:max-vector}
  Given the SL plane region \tri{BDU}, and $O \triangleq \opinion{O}$ whose coordinates are $\tuple{x_O, y_O}$ where $\displaystyle{x_O \triangleq \frac{\disbelief{O} + \uncertainty{O}~\cos(\frac{\pi}{3})}{\sin(\frac{\pi}{3})}}$ and $y_O \triangleq \uncertainty{O}$, and $\displaystyle{\alpha_O \triangleq \ang{OBD} = \arctan\left(\frac{\uncertainty{O}~\sin(\frac{\pi}{3})}{\disbelief{O} + \uncertainty{O}~\cos(\frac{\pi}{3})}\right)}$, let us define $M_O \triangleq \tuple{x_{M_O}, y_{M_O}}$ as the intersection of the straight line passing for $O$ and $B$, and the straight line passing for $U$ and $D$, and thus define the following.
  \begin{itemize}
  \item $\displaystyle{x_{M_O} \triangleq \frac{2 - y_O + \tan(\alpha_O)~ x_O}{\tan(\alpha_O) + \sqrt{3}}}$;

  \item $\displaystyle{y_{M_O} \triangleq -\sqrt{3}~ x_{M_O} + 2}$.
  \end{itemize}
\end{definition}

\section{Proofs}
\label{sec:proofs}

\propnaive*
\begin{proof}\mbox{}

  \begin{itemize}
  \item[$i$.] Let us prove that $0 \leq \belief{W} \leq 1$, $0 \leq \disbelief{W} \leq 1$,  $0 \leq \uncertainty{W} \leq 1$.

    To prove that $\belief{\W{}} \geq 0$, $\disbelief{\W{}} \geq 0$, and $\uncertainty{\W{}} \geq 0$ is trivial since $\C{}$ and $\T{}$ are opinions.

    $\belief{\W{}} = \belief{\C{}} \cdot \belief{\T{}} \leq 1$ is immediate since $\C{}$ and $\T{}$ are opinions.

    $\disbelief{\W{}} = \belief{} \cdot \disbelief{\T{}} + \disbelief{\C{}} \leq 1$ can be rewritten as $\disbelief{\T{}} \leq 1 + \frac{\uncertainty{\C{}}}{\belief{\C{}}}$ if $\belief{\C{}} \neq 0$, or $\disbelief{\C{}} \leq 1$ otherwise. Both in-equations are verified since $\C{}$ and $\T{}$ are opinions.

    $\uncertainty{\W{}} = \belief{} \cdot \uncertainty{\T{}} + \uncertainty{\C{}} \leq 1$ can be rewritten as $\disbelief{\T{}} \leq 1 + \frac{\disbelief{\C{}}}{\belief{\C{}}}$ if $\belief{\C{}} \neq 0$, or $\uncertainty{\C{}} \leq 1$ otherwise. Both in-equations are verified since $\C{}$ and $\T{}$ are opinions.
    
    Finally, $\belief{\W{}} + \disbelief{\W{}} + \uncertainty{\W{}} = \belief{\C{}} (\belief{\T{}} + \disbelief{\T{}} + \uncertainty{\T{}}) + \disbelief{\C{}} + \uncertainty{\C{}} = 1$

  \item[$ii$.] Given $\C{} = \tuple{1, 0, 0}$, $\W{} = \optrustconf{\T{}}{\C{}}{n}$ is such that:
    \begin{itemize}
    \item $\belief{\W{}} = \belief{\C{}} \cdot \belief{\T{}} = \belief{\T{}}$;
    \item $\disbelief{\W{}} = \belief{\C{}} \cdot \disbelief{\T{}} + \disbelief{\C{}} = \disbelief{\T{}}$;
    \item $\uncertainty{\W{}} = \belief{\C{}} \cdot \uncertainty{\T{}} + \uncertainty{\C{}} = \uncertainty{\T{}}$.
    \end{itemize}

  \item[$iii$.] Given $\C{} = \tuple{0, 0, 1}$, $\W{} = \optrustconf{\T{}}{\C{}}{n}$ is such that:
    \begin{itemize}
    \item $\belief{\W{}} = \belief{\C{}} \cdot \belief{\T{}} = 0 = \belief{\C{}}$;
    \item $\disbelief{\W{}} = \belief{\C{}} \cdot \disbelief{\T{}} + \disbelief{\C{}} = \disbelief{\C{}}$;
    \item $\uncertainty{\W{}} = \belief{\C{}} \cdot \uncertainty{\T{}} + \uncertainty{\C{}} = \uncertainty{\C{}}$.
    \end{itemize}

  \item[$iv$.] By contradiction, $\belief{\W{}} = \belief{\C{}} \cdot \belief{\T{}} > \belief{\T{}}$ leads to $\belief{\C{}} > 1$, which is impossible.\qed
  \end{itemize}
\end{proof}

\propadmissiblespace*
\begin{proof}
  By Definition \ref{defn:admissiblespace}, $\admissibleopinionset{\T{}{}} = \set{X \in \opinionset{} | \belief{X} \leq \belief{\T{}{}}}$. From Prop. \ref{propn:cartesian-transformation},
  \[
  \begin{split}
    \belief{X} & \leq \belief{\T{}}\\
    y_X & \geq \sqrt{3}x_X + 2(1 - \belief{\T{}})\\
    \uncertainty{X} & \geq 1 - \disbelief{X} - \belief{\T{}}
  \end{split}
  \]

  Therefore:
  \begin{itemize}
  \item if $\disbelief{X} = 0$, $\uncertainty{X} \geq 1 - \belief{\T{}}$ (limit case $\tuple{\belief{\T{}}, 0, 1 - \belief{\T{}}}$;
  \item if $\uncertainty{X} = 0$, $\disbelief{X} \geq 1 - \belief{\T{}}$ (limit case $\tuple{\belief{\T{}}, 1-\belief{\T{}}, 0} = Q$).   \qed
  \end{itemize}
\end{proof}

\thmfamilydiscount*

\begin{proof}
    Proving the thesis in the limit case is trivial. In the following we will assume, without loss of generality, that $\alpha_{C'} \neq \frac{\pi}{2}$, $\alpha_{C'} \neq -\frac{\pi}{3}$, $\alpha_{C'} \neq \frac{2}{3} \pi$.

    \emph{(i.)} $W = \opinion{W}$ must respect 
    \begin{equation}
      \label{eq:1}
      \uncertainty{W} + \disbelief{W} \leq 1
    \end{equation}

    From Def. \ref{defn:combination} it is clear that Equation \ref{eq:1} can be rewritten as follows.

    \begin{equation}
      \label{eq:2}
      \begin{split}
        & \uncertainty{T} + \disbelief{T} + \\ & + \frac{r_C}{2} \frac{2 \sqrt{\tan^2(\alpha_{C'}) + 1}}{|\tan(\alpha_{C'}) + \sqrt{3}|} ~\belief{T} ~(\sin(\alpha_{C'}) + \sqrt{3}\cos(\alpha_{C'})) \leq 1
      \end{split}
    \end{equation}
    
    In turn, using the relation $\tan(\alpha_{C'}) = \frac{\sin(\alpha_{C'})}{\cos(\alpha_{C'})}$, this can be rewritten as 

    \[
    \uncertainty{T} + \disbelief{T} + r_C \belief{T} \leq 1
    \]

    \noindent
    which entails the requirement that $r_C \leq \frac{1 - \uncertainty{T} - \disbelief{T}}{\belief{T}} = \frac{\belief{T}}{\belief{T}} = 1$. However, from definition \ref{defn:combination}, we know that $r_{C} \leq 1$, fulfilling this requirement.

    \emph{(ii.)} $C = \tuple{1, 0, 0}$ implies that $|\vv{BC}| = 0$ and thus $r_C = 0$. Therefore, from Def. \ref{defn:combination}, $\uncertainty{W} = \uncertainty{T} + \sin(\alpha_{C'}) r_C |\vv{TM_{C'}}| = \uncertainty{T}$ and this results also implies that $\disbelief{W} = \disbelief{T}$. Since Point 1 shows that $W$ is an opinion in SL, we conclude that $W = T$.

    \emph{(iii.)} $C = \tuple{0, 0, 1}$ implies $r_C = 1$, $\frac{\frac{\pi}{3} \epsilon_T}{\frac{\pi}{3}} - \beta_T \leq \alpha_{C'} \leq \epsilon_T - \beta_T$, and thus $\alpha_{C'} = \epsilon_T - \beta_T = \epsilon_T - \beta_T = \frac{2}{3} \pi - \delta_t$. Therefore, we obtain that $\uncertainty{W} = \uncertainty{T} + \frac{\belief{T}}{2} (1 + \frac{\sqrt{3}}{\tan(\delta_T)})$. From Definition \ref{def:properties} and the trigonometric property that $\tan(\arcsin(v)) = \frac{v}{\sqrt{1 - v^2}}$ we obtain that $\uncertainty{W} = \uncertainty{T} + \frac{\belief{T}}{2} + \frac{\sqrt{3}}{2} \sqrt{|\vv{TU}|^2 - \belief{T}^2}$. From Definition \ref{def:properties} we can write:

    \begin{equation}
      \label{eq:5}
      \begin{split}
        \uncertainty{W} & = \uncertainty{T} + \frac{\belief{T}}{2} + \frac{\sqrt{3}}{2} \frac{1 + \disbelief{T} - \uncertainty{T}}{\sqrt{3}}\\
        & = \frac{1}{2} (1 + \belief{T} + \disbelief{T} + \uncertainty{T} ) = 1
      \end{split}
    \end{equation}

    Similarly, $\disbelief{W} = \disbelief{T} + \frac{\uncertainty{T}}{2} - \frac{1}{2} + \frac{\sqrt{3}}{2} \frac{1}{\sin(\delta_T)} \belief{T} = \disbelief{T} + \frac{\uncertainty{T}}{2} - \frac{1}{2} + \frac{\sqrt{3}}{2} |\vv{TU}|$. From Def. \ref{def:properties} we have

    \begin{equation}
      \label{eq:6}
      \begin{split}
        \disbelief{W} & = \disbelief{T} + \frac{\uncertainty{T}-1}{2} + \frac{3}{4}\belief{T} - \frac{\sqrt{3}}{4} \belief{T} \frac{1 + \disbelief{T} - \uncertainty{T}}{\sqrt{3} \belief{T}} \\ 
        & = \frac{1}{4} (4\disbelief{T} + 2 \uncertainty{T} - 2 + 3 \belief{T} - 1 + \uncertainty{T} - \disbelief{T}) = 0
      \end{split}
    \end{equation}

    From Equations \ref{eq:5} and \ref{eq:6}, together with Point 1, it follows that $W = \tuple{0,0,1}=C$.

    \emph{(iv.)} Suppose instead $\belief{W} > \belief{T}$.

    \begin{equation*}
      \begin{split}
        & 1 - \disbelief{W} - \uncertainty{W} > 1 - \disbelief{T} - \uncertainty{T}\\
        & \disbelief{W} + \uncertainty{W} < \disbelief{T} + \uncertainty{T} \\
        & \disbelief{T} + \sin(\alpha_{C'} + \frac{\pi}{3}) \frac{r_C}{\sin(\alpha_{C'} + \frac{\pi}{3})} + \uncertainty{T} < \disbelief{T} + \uncertainty{T}\\
        & r_C < 0
      \end{split}
    \end{equation*}
    
    \noindent
    but $0 \leq r_C \leq 1$. \emph{Quod est absurdum.} \qed
  \end{proof}

\fusionprop*
  \begin{proof}
    \emph{($i.$)} To prove that $\opinion{\fusionop{1}(W_1, \ldots, W_n)}$ is an opinion, we have to show that 
$\uncertainty{\fusionop{1}(W_1, \ldots, W_n)} + \disbelief{\fusionop{1}(W_1, \ldots, W_n)} \leq 1$ holds.

$\begin{array}{rl} 
  \uncertainty{\fusionop{1}(W_1, \ldots, W_n)} + \disbelief{\fusionop{1}(W_1, \ldots, W_n)} = & \displaystyle{\frac{1}{\sum_{i=1}^n K_i} \left( \sum_{i=1}^n K_i( \uncertainty{W_i} + \disbelief{W_i})\right)}\\
  = & \displaystyle{\frac{1}{\sum_{i=1}^n K_i} \left( \sum_{i=1}^n K_i(1 - \belief{W_i})\right)}\\
  = & 1 - \displaystyle{\frac{1}{\sum_{i=1}^n K_i} \left( \sum_{i=1}^n K_i~\belief{W_i} \right)}
\end{array}$

    \emph{($ii.$)} From Prop. \ref{propn:cartesian-transformation}, 


\noindent
$\begin{cases} 
  \displaystyle{x_{\fusionop{1}(W_1, \ldots, W_n)} =  \frac{\disbelief{\fusionop{1}(W_1, \ldots, W_n)}}{\sin(\frac{\pi}{3})} + \frac{1}{2~\sin(\frac{\pi}{3})~\sum_{i=1}^n K_i} \left( \sum_{i=1}^n K_i~ \uncertainty{W_i}\right)}\\
  \displaystyle{x_{\fusionop{1}(W_1, \ldots, W_n)} = \frac{1}{\sin(\frac{\pi}{3}) \sum_{i=1}^n K_i} \left( \sum_{i=1}^n K_i~ (\disbelief{W_i} + \frac{\uncertainty{W_i}}{2}) \right)}
\end{cases}$

\noindent
Thus we obtain:


\noindent
$\begin{array}{rl} 
\disbelief{\fusionop{1}(W_1, \ldots, W_n)} = & \displaystyle{\sin(\frac{\pi}{3})\left(\frac{1}{\sin(\frac{\pi}{3}) \sum_{i=1}^n K_i} \left( \sum_{i=1}^n K_i~ (\disbelief{W_i} + \frac{\uncertainty{W_i}}{2}) \right) + \right. }\\
    & ~~~~ \displaystyle{\left.- \frac{1}{2~\sin(\frac{\pi}{3})~\sum_{i=1}^n K_i} \left( \sum_{i=1}^n K_i~ \uncertainty{W_i}\right) \right)}\\
    = & \displaystyle{\frac{1}{\sum_{i=1}^n K_i} \left( \left(\sum_{i=1}^n K_i~ (\disbelief{W_i} + \frac{\uncertainty{W_i}}{2})\right) - \left( \sum_{i=1}^n \frac{\uncertainty{W_i}}{2} \right) \right)}\\
    = & \displaystyle{\frac{1}{\sum_{i=1}^n K_i} \left( \sum_{i=1}^n K_i~ \disbelief{W_i}\right)}
\end{array}$

Since $\displaystyle{\frac{1}{\sum_{i=1}^n K_i} \left( \sum_{i=1}^n K_i~\belief{W_i} \right)} \geq 0$, then $\uncertainty{\fusionop{1}(W_1, \ldots, W_n)} + \disbelief{\fusionop{1}(W_1, \ldots, W_n)} \leq 1$ holds.

    \emph{($iii.$)} Immediate from Prop. \ref{propn:cartesian-transformation}.\qed
  \end{proof}

\end{document}